\documentclass[11pt]{article}
\usepackage{hyperref}
\usepackage{times}  
\usepackage{mathpazo}
\usepackage{amssymb,amsmath,amsthm}
\usepackage{epsfig}
\usepackage{tcolorbox}
\usepackage{lipsum} 
\usepackage{enumitem}

\setlength{\topmargin}{-0.6in}
\setlength{\textwidth}{6.5in} 
\setlength{\textheight}{8.7in}
\setlength{\evensidemargin}{-.1in}
\setlength{\oddsidemargin}{-.1in}

\newtheorem{theorem}{Theorem}[section]
\newtheorem{proposition}[theorem]{Proposition}
\newtheorem{definition}[theorem]{Definition}

\newtheorem{lemma}[theorem]{Lemma}


\newcommand{\qedsymb}{\hfill{\rule{2mm}{2mm}}}
\renewenvironment{proof}[1][]{\begin{trivlist}
\item[\hspace{\labelsep}{\bf\noindent Proof#1:\/}] }{\qedsymb\end{trivlist}}

\def\calI{{\cal I}}
\def\calC{{\cal C}}

\def\N{\mathbb{N}}

\newcommand{\Crown}{\mathsf{Crown}}

\newcommand{\NP}{\mathsf{NP}}


\renewcommand{\epsilon}{\varepsilon}

\newcommand{\rank}{\mathop{\mathrm{rank}}}
\newcommand{\minrank}{\mathop{\mathrm{minrk}}}
\newcommand{\Capa}{\mathop{\mathrm{Cap}}}
\newcommand{\Ind}{\mathop{\mathrm{Ind}}}

\newcommand{\Fset}{\mathbb{F}}         

\newcommand{\Conf}{{\mathfrak{C}}}

\newcommand{\IC}{\textsc{Index Coding}}
\newcommand{\DIC}{\textsc{Dual Index Coding}}
\newcommand{\SC}{\textsc{Storage Capacity}}

\newcommand{\DMR}{\textsc{Dual \linebreak[0] Minrank}}

\begin{document}

\title{{\bf Kernels for Storage Capacity and Dual Index Coding}}

\author{
Ishay Haviv\thanks{School of Computer Science, The Academic College of Tel Aviv-Yaffo, Tel Aviv 61083, Israel. Research supported in part by the Israel Science Foundation (grant No.~1218/20).}
}

\date{}

\maketitle

\begin{abstract}
The storage capacity of a graph measures the maximum amount of information that can be stored across its vertices, such that the information at any vertex can be recovered from the information stored at its neighborhood. The study of this graph quantity is motivated by applications in distributed storage and by its intimate relations to the index coding problem from the area of network information theory. In the latter, one wishes to minimize the amount of information that has to be transmitted to a collection of receivers, in a way that enables each of them to discover its required data using some prior side information.

In this paper, we initiate the study of the $\SC$ and $\IC$ problems from the perspective of parameterized complexity. We prove that the $\SC$ problem parameterized by the solution size admits a kernelization algorithm producing kernels of linear size. We also provide such a result for the $\IC$ problem, in the linear and non-linear settings, where it is parameterized by the dual value of the solution, i.e., the length of the transmission that can be saved using the side information. A key ingredient in the proofs is the crown decomposition technique due to Chor, Fellows, and Juedes (WG~2003, WG~2004). As an application, we significantly extend an algorithmic result of Dau, Skachek, and Chee (IEEE Trans. Inform. Theory,~2014).
\end{abstract}

\section{Introduction}

The storage capacity of a graph, introduced in 2014 by Mazumdar~\cite{Mazumdar15} and by Shanmugam and Dimakis~\cite{ShanmugamD14}, measures the maximum amount of information that can be stored across its vertices, such that the information at any vertex can be recovered from the information stored at its neighborhood.
This graph quantity generalizes the well-studied notion of locally repairable codes and is motivated by applications in distributed storage, in which some data is stored across multiple servers and one desires to be able to recover the information of any server from the information of accessible servers. The topology of the distributed system is given by a graph whose vertices represent the servers, where two servers are adjacent if they are accessible to each other due to their physical locations or the architecture of the system.
Formally speaking, the storage capacity $\Capa_q(G)$ of a graph $G$ on the vertex set $[n] = \{1,2,\ldots,n\}$ over an alphabet of size $q \geq 2$ is defined as the maximum of $\log_q |C|$, taken over all codes $C \subseteq [q]^n$ such that for every codeword $x \in C$ and for every vertex $i \in [n]$, the value of $x_i$ can be determined from the restriction $x_{N_G(i)}$ of $x$ to the entries of the neighborhood $N_G(i)$ of $i$.

The storage capacity of graphs turns out to be closely related to the problem of index coding with side information, a central problem in network information theory that was introduced by Birk and Kol~\cite{BirkKol98} in 1998.
This problem involves $n$ receivers $R_1, \ldots, R_n$, such that the $i$th receiver $R_i$ is interested in the $i$th symbol $x_i$ of an $n$-symbol message $x \in \Sigma^n$ over an alphabet $\Sigma$ and has prior side information that consists of some symbols $x_j$ with $j \in [n] \setminus \{i\}$. The side information map, which is assumed here to be symmetric, can be represented by a graph $G$ on the vertex set $[n]$, where two vertices $i$ and $j$ are adjacent if $R_i$ knows $x_j$ and $R_j$ knows $x_i$.
An index code of length $\ell$ over $\Sigma$ for $G$ is an encoding function $E: \Sigma^n \rightarrow \Sigma^\ell$, such that for every $x \in \Sigma^n$, broadcasting the codeword $E(x)$ will enable each receiver $R_i$ to discover $x_i$ based on $E(x)$ and on its side information $x_{N_G(i)}$. For a side information graph $G$ and an alphabet $\Sigma$ of size $q$, let $\Ind_q(G)$ denote the minimum possible length $\ell$ of an index code over $\Sigma$ for $G$. An index code is called linear if its alphabet is a field $\Fset$ and the encoding function is linear over $\Fset$. It was shown by Bar-Yossef, Birk, Jayram, and Kol~\cite{BBJK06} that the minimum possible length of a linear index code over a field $\Fset$ for a graph $G$ coincides with a graph quantity called minrank and denoted by ${\minrank}_\Fset(G)$, which was introduced in the late seventies by Haemers~\cite{Haemers78} in the study of the Shannon capacity of graphs (see Definition~\ref{def:minrank}).

The storage capacity and index coding problems are related in a dual manner.
It was shown in~\cite{Mazumdar15} (see also~\cite{AlonLSWH08}) that for every graph $G$ on $n$ vertices and for every fixed integer $q \geq 2$, the storage capacity ${\Capa}_q(G)$ differs from the quantity $n-{\Ind}_q(G)$ by no more than an additive term logarithmic in $n$, and that there exist graphs for which they are not equal. The quantity $n-{\Ind}_q(G)$ can be viewed as a dual (or complementary) value of the minimum length of an index code for $G$ over an alphabet of size $q$, expressing the maximum number of symbols that can be saved by an index code, in comparison to transmitting the entire message.
It was further shown in~\cite{ShanmugamD14} that the linear variant of the storage capacity of $G$, where the codewords are required to form a linear subspace over a field $\Fset$, precisely coincides with the dual value of the optimal length of a linear index code over $\Fset$ for $G$, that is, $n-{\minrank}_\Fset(G)$.

The computational tasks of determining and approximating the values of $\Capa_q(G)$, $\Ind_q(G)$, and ${\minrank}_\Fset(G)$ for an input graph $G$ were extensively studied in the literature.
Mazumdar~\cite{Mazumdar15} proved that there exists a polynomial-time algorithm that approximates the value of ${\Capa}_q(G)$ for a given graph $G$ to within a factor of $2$.
It was further shown by Mazumdar, McGregor, and Vorotnikova~\cite{MazumdarMV19} that given a graph $G$, it is possible to exactly determine the values of ${\Capa}_q(G)$ and $\Ind_q(G)$ in double-exponential time, provided that $q \geq 2$ is a fixed constant (see Proposition~\ref{prop:exp_algo}).
On the hardness side, Langberg and Sprintson~\cite{LangbergS08} proved that for every constant $q \geq 2$, it is impossible to efficiently approximate the value of $\Ind_q(G)$ for an input graph $G$ to within any constant factor, assuming a certain variant of the Unique Games Conjecture. As for the minrank over a given field $\Fset$, the problem of deciding whether an input graph $G$ satisfies ${\minrank}_\Fset(G) \leq k$ is solvable in polynomial time for $k \in \{1,2\}$, but it is $\NP$-hard for any $k \geq 3$, as was proved by Peeters~\cite{Peeters96} already in 1996.
More recently, it was proved in~\cite{ChawinH23} that for every finite field $\Fset$, it is $\NP$-hard to approximate ${\minrank}_\Fset(G)$ for a given graph $G$ to within any constant factor (see also~\cite{ChawinH24}).
Yet, for certain graph families of interest, it is possible to determine the above quantities in polynomial time or to approximate them to within tighter factors (see~\cite{BerlinerL11,Arbabjolfaei016,AgarwalFM19,ChawinH24a}).
The complexity of the index coding problem on instances with near-extreme values was studied by Dau, Skachek, and Chee~\cite{DauSC14}.
Among other things, they showed that for $k \in \{0,1,2\}$, there exist polynomial-time algorithms that decide whether a given graph $G$ on $n$ vertices satisfies ${\Ind}_q(G)=n-k$ and whether it satisfies ${\minrank}_\Fset(G) = n-k$.

Before turning to our contribution, let us briefly present a couple of fundamental concepts from the area of parameterized complexity (see, e.g.,~\cite{CyganFKLMPPS15,KernelBook19}). This area investigates the complexity of problems with respect to specified parameters of their inputs.
A parameterized problem is a collection $L$ of pairs $(x,k) \in \Sigma^* \times \N$ for a fixed finite alphabet $\Sigma$, where $k$ is referred to as the parameter of the problem.
A fixed-parameter algorithm for $L$ is an algorithm that given an instance $(x,k)$ decides whether $(x,k) \in L$ in time $|x|^c \cdot f(k)$, where $c$ is an absolute constant and $f: \N \rightarrow \N$ is a computable function.
A kernelization algorithm for $L$ is an algorithm that given an instance $(x,k)$ runs in polynomial time and returns an instance $(x',k')$, called a kernel, satisfying $|x'|,k' \leq h(k)$ for some computable function $h: \N \rightarrow \N$, such that the instances $(x,k)$ and $(x',k')$ are equivalent, in the sense that $(x,k) \in L$ if and only if $(x',k') \in L$. Note that a fixed-parameter algorithm for $L$ can be obtained by first applying a kernelization algorithm to reduce a given instance to an equivalent one whose size is bounded by a function of the parameter, and then calling some algorithm that decides $L$ on the reduced instance.

\subsection{Our Contribution}

The present paper initiates the study of the storage capacity and index coding problems from the perspective of parameterized complexity.
For an integer $q \geq 2$, consider the $\SC_q$ problem parameterized by the solution size, namely, the parameterized problem that given a graph $G$ and an integer $k \geq 0$ asks to decide whether ${\Capa}_q(G) \geq k$, where $k$ is the parameter. We prove the following kernelization result.

\begin{theorem}\label{thm:IntroKernelCap}
For every integer $q \geq 2$, there exists a (polynomial-time) kernelization algorithm for the $\SC_q$ problem, which given an instance $(G,k)$ returns an equivalent instance $(G',k')$ where $G'$ has at most $\max(3k'-3,0)$ vertices and $k' \leq k$.
\end{theorem}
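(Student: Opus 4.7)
The plan is to iteratively apply a single crown-based reduction rule that cleanly removes a crown from $G$ while decreasing $k$ by the size of its head. The technical core is the following reduction lemma: if $G$ admits a crown decomposition $(I,H,M)$ (meaning $I \subseteq V(G)$ is independent, $H = N_G(I)$, and $M$ is a matching that saturates $H$ by edges going into $I$), and if $W = V(G) \setminus (I \cup H)$, then for every $q \geq 2$,
\[
\Capa_q(G) \;=\; \Capa_q(G[W]) + |H|.
\]

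For the upper bound I would take an optimal code $C$ for $G$ and observe that, since $I$ is independent with $N(I) = H$, every $v' \in I$ has its value $x_{v'}$ determined by $x_H$, so the map $x \mapsto (x_H, x_W)$ is injective on $C$. For each fixed $h \in [q]^H$, I would argue that the fiber $\{x_W : x \in C,\, x_H = h\}$ is a code for $G[W]$: any $v \in W$ lies outside $N(I) = H$, hence $N_G(v) \cap I = \emptyset$, and the recovery function of $v$ from $C$, with $x_H$ set to $h$, becomes a function of $x_{N_G(v) \cap W} = x_{N_{G[W]}(v)}$. Each fiber therefore has at most $q^{\Capa_q(G[W])}$ elements, and summing over the $q^{|H|}$ values of $h$ gives $|C| \leq q^{|H| + \Capa_q(G[W])}$. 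For the lower bound, I would take an optimal code $C''$ for $G[W]$ and construct a code for $G$ of size $|C''| \cdot q^{|H|}$ by letting $x_W$ range over $C''$, letting $x_H$ range freely over $[q]^H$, setting $x_{v'} = x_{M(v')}$ for every matched $v' \in I$, and fixing $x_{v'}$ to an arbitrary constant for every unmatched $v' \in I$; all recovery constraints of $G$ are then immediate from the crown structure.

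With the reduction lemma in hand, the kernelization algorithm runs as follows. First, discard any isolated vertex of $G$, which must be constant in every code and so preserves $\Capa_q$. Compute a maximum matching and use the easy inequality $\Capa_q(G) \geq \nu(G)$, obtained by assigning each matched pair a shared free symbol: if $\nu(G) \geq k$, return a canonical yes-instance. Otherwise $\nu(G) \leq k-1$, and while $|V(G)| \geq 3k-2$, invoke the Chor--Fellows--Juedes crown lemma to extract a crown decomposition in polynomial time, then replace $(G,k)$ by $(G - (I \cup H),\, k-|H|)$ using the reduction lemma; if $k$ drops to $0$ or below during the process, output a canonical yes-instance. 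Termination is clear since $|V(G)|$ strictly decreases at each step, and when the loop exits we have $\nu(G) \leq k-1$ together with no crown, so the crown lemma yields $|V(G)| \leq 3(k-1) = 3k-3$, and the output parameter satisfies $k' \leq k$ by construction, matching the claimed bound.

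The main obstacle is the upper-bound direction of the reduction lemma. A naive projection $C \to [q]^{V \setminus I}$ does not obviously yield a code for $G - I$, because the recovery function at a vertex of $H$ may depend on the value of a matched partner in $I$, which itself depends on $x_H$; this apparent circularity forces the stratification by $x_H$ described above, and it is exactly this stratified counting that produces the clean additive $|H|$ term and in turn the tight $3k-3$ kernel bound.
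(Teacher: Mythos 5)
Your crown reduction lemma is the same identity the paper proves (its Lemma~\ref{lemma:crown}, Item~\ref{itm:Cap}), and your proof of it is correct and clean: the upper bound via stratifying an optimal code by its $H$-restriction and bounding each fiber by $\Capa_q(G[W])$ is a direct code-level version of the paper's pigeonhole argument on a maximum independent set of $\Conf_q(G)$, and your explicit lower-bound construction (copy matched partners, leave $H$ free) replaces the paper's slightly more indirect route through the clique-cover bound $\Ind_q(G[C\cup H])\le |C|$ plus duality. The overall strategy -- remove isolated vertices, invoke the Chor--Fellows--Juedes crown machinery, subtract $|H|$ from $k$ -- is the same as the paper's.

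There is, however, a genuine gap in the algorithm's loop. You establish two preconditions \emph{once}, before entering the loop: that $G$ has no isolated vertices, and that $\nu(G)\le k-1$. Neither is an invariant. After you replace $G$ by $G[R]$ and $k$ by $k-|H|$, a vertex of $R$ whose neighbors all lay in $H$ becomes isolated, violating the hypothesis of Theorem~\ref{thm:crown_algo}. And the matching bound degrades: you only know $\nu(G[R])\le\nu(G)\le k-1$, but the new parameter is $k'=k-|H|<k$, so $\nu(G[R])\ge k'$ is entirely possible; in that case the $\Crown$ procedure is entitled to return a matching of size $k'$ rather than a crown, a case your algorithm does not handle. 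The fix is exactly what the paper does in its Step~\ref{itm:crown}: after each crown extraction, loop back to re-remove isolated vertices, and when $\Crown$ is called, branch on its output -- return a canonical YES if it yields a matching of size $k$ (using Lemma~\ref{lemma:matching}), and recurse only if it yields a crown. With that correction your argument goes through and gives the stated $\max(3k'-3,0)$ bound.
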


We next consider the computational index coding problem parameterized by the dual value of the solution size.
Formally, for an integer $q \geq 2$, let $\DIC_q$ denote the parameterized problem that given a graph $G$ on $n$ vertices and an integer $k \geq 0$ asks to decide whether ${\Ind}_q(G) \leq n-k$, where $k$ is the parameter. We prove the following kernelization result.

\begin{theorem}\label{thm:IntroKernelInd}
For every integer $q \geq 2$, there exists a (polynomial-time) kernelization algorithm for the $\DIC_q$ problem, which given an instance $(G,k)$ returns an equivalent instance $(G',k')$ where $G'$ has at most $\max(3k'-3,0)$ vertices and $k' \leq k$.
\end{theorem}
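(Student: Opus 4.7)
The plan is to mirror the kernelization strategy used for Theorem~\ref{thm:IntroKernelCap}, applying the crown decomposition technique of Chor, Fellows, and Juedes to the dual index coding quantity $n - \Ind_q(G)$. The starting point is the standard XOR-based index code, which yields $n - \Ind_q(G) \geq \nu(G)$: transmitting $x_u + x_v$ for each edge $\{u,v\}$ of a matching $M$ and $x_w$ for each vertex unsaturated by $M$ gives a valid index code of length $n - |M|$. Combining this with the observation that removing an isolated vertex preserves the dual savings (both $n$ and $\Ind_q(G)$ drop by $1$), I would first ensure the input has no isolated vertices and satisfies $\nu(G) \leq k-1$, otherwise returning a trivial yes-instance. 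If in addition $|V(G)| > 3(k-1)$, the crown lemma produces in polynomial time a crown $(C,H)$: a non-empty independent set $C$, the set $H = N_G(C)$, and a matching $M$ in $G[C,H]$ saturating $H$.

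The heart of the argument is the structural identity
\[
\Ind_q(G) \;=\; \Ind_q(G') + |C|, \quad \text{where } G' \defeq G - (C \cup H).
\]
For the $\leq$ direction, I would construct an index code for $G$ by concatenating an optimal code for $G'$, the $|H|$ XOR transmissions $x_{h_i} + x_{c_i}$ across the matched pairs $\{h_i, c_i\} \in M$, and the $|C|-|H|$ direct transmissions $x_v$ for the unmatched $v \in C$; each receiver in $V(G')$ decodes via the $G'$-code (using its $G'$-side information, which is a subset of its $G$-side information), each matched pair decodes via XOR using the other endpoint of the matching, and each unmatched $v \in C$ reads $x_v$ directly. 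For the $\geq$ direction, given an optimal code $\mathcal{E}$ for $G$ of length $\ell$, I would define a code $\mathcal{E}^*$ for $G - H$ by hard-coding $x_H$ to zero, namely $\mathcal{E}^*(x_C, x_{V(G')}) \defeq \mathcal{E}(x_C, \mathbf{0}_H, x_{V(G')})$. This remains a valid index code of length $\ell$, because every vertex $v \in C$ is isolated in $G - H$ (its $G$-neighborhood sits entirely in $H$) and is decodable using its $G$-decoder together with the known constant $\mathbf{0}_H$, while every vertex $v \in V(G')$ retains in $G - H$ all of its $G'$-neighbors as side information and can likewise feed zeros into the $G$-decoder in place of $x_{N_G(v) \cap H}$. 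Since $C$ is isolated in $G - H$, additivity across components gives $\Ind_q(G - H) = |C| + \Ind_q(G')$, hence $\ell \geq |C| + \Ind_q(G')$.

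Given this identity, the reduction rule ``replace $(G,k)$ by $(G-(C \cup H),\,k-|H|)$'' preserves the predicate $n - \Ind_q(G) \geq k$, since both $n$ and $\Ind_q(G)$ drop by $|C|+|H|$ and $|C|$ respectively, so the savings drop by exactly $|H|$. Because $|H| \geq 1$ in any crown of an isolated-vertex-free graph, each application strictly decreases $k$, so the procedure terminates in polynomial time. What remains is an equivalent instance $(G',k')$ with $k' \leq k$ and $|V(G')| \leq 3(k'-1)$, or the empty instance when $k' \leq 0$, matching the $\max(3k'-3,0)$ bound.

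The main obstacle I anticipate is the $\geq$ direction of the structural identity in the non-linear setting. Natural information-theoretic lower bounds such as $\Ind_q(G) \geq \MAIS(G)$ would only yield $\Ind_q(G) \geq |C| + \MAIS(G')$, which is strictly weaker than what is needed; the hard-coding argument above succeeds precisely because the crown structure forces $C$ to have no edges into $V(G')$, so freezing $x_H$ cleanly decouples the decoding of $C$-receivers from the decoding of $V(G')$-receivers without any loss.
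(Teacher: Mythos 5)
Your proposal is correct and mirrors the paper's approach: the same kernelization loop built on crown decompositions (remove isolated vertices, then either certify a matching of size $k$ or pass from $(G,k)$ to $(G[R], k-|H|)$), anchored by the same structural identity $\Ind_q(G) = \Ind_q(G[R]) + |C|$. The one place you diverge is the $\geq$ direction of this identity. The paper reasons inside the confusion graph $\Conf_q(G)$: it restricts to the induced subgraph $\mathfrak{B}$ of vectors with a fixed $H$-part, observes that $\mathfrak{B}$ consists of $q^{|C|}$ disjoint copies of $\Conf_q(G[R])$ with all cross edges, and compares chromatic numbers. You instead give an operational argument: hard-code $x_H = \mathbf{0}$ to turn any length-$\ell$ index code for $G$ into a length-$\ell$ index code for $G-H$ (justified vertex by vertex, using $N_G(C)\subseteq H$ and the separation of $C$ from $R$), note that $G-H$ is $G[R]$ plus $|C|$ isolated vertices, and invoke the isolated-vertex rule $|C|$ times. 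These two arguments are equivalent---the confusion graph of your $G-H$ with $x_H$ fixed is exactly the subgraph $\mathfrak{B}$ in the paper---but your phrasing avoids the explicit confusion-graph machinery in the key step and makes the reasoning visible directly at the level of codes and decoders, which some readers may find more transparent. All remaining details (the concatenated code for the $\leq$ direction, the threshold $|V|\geq 3k-2$ for the crown lemma, termination because $|H|\geq 1$, and the final $\max(3k'-3,0)$ bound) match the paper's proof.
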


We finally consider the computational problem associated with the minrank quantity over any given field, parameterized by its dual value.
For a field $\Fset$, let $\DMR_\Fset$ denote the parameterized problem that given a graph $G$ on $n$ vertices and an integer $k \geq 0$ asks to decide whether ${\minrank}_\Fset(G) \leq n-k$, where $k$ is the parameter. We prove the following kernelization result.

\begin{theorem}\label{thm:IntroKernelMR}
For every field $\Fset$, there exists a (polynomial-time) kernelization algorithm for the \linebreak $\DMR_\Fset$ problem, which given an instance $(G,k)$ returns an equivalent instance $(G',k')$ where $G'$ has at most $\max(3k'-3,0)$ vertices and $k' \leq k$.
\end{theorem}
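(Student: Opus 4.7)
The plan is to mirror the proof strategy of Theorem~\ref{thm:IntroKernelCap}, using the fact that for every field $\Fset$ the quantity $n-\minrank_\Fset(G)$ plays the role of a linear analogue of the storage capacity of $G$, and then applying the same crown-decomposition framework, with the combinatorial storage-capacity lemma replaced by a linear-algebraic identity for $\minrank_\Fset$.

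\textbf{Step 1 (Reduction rules).} I would employ three rules: (i) if $G$ has an isolated vertex $v$, replace $(G,k)$ by $(G-v,k)$, justified by the identity $\minrank_\Fset(G)=\minrank_\Fset(G-v)+1$; (ii) if $G$ admits a crown decomposition $(I,H,R)$, meaning that $I$ is a nonempty independent set with $H=N_G(I)$ and there is a matching in the bipartite graph between $I$ and $H$ saturating $H$, replace $(G,k)$ by $(G[R],k-|H|)$; (iii) if $G$ has a matching of size at least $k$, output a trivial YES instance. Each rule runs in polynomial time via standard bipartite-matching routines.

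\textbf{Step 2 (Key rank identity).} The correctness of the crown rule hinges on the identity
$$\minrank_\Fset(G) \;=\; \minrank_\Fset(G[R]) + |I|.$$
For the $\le$ direction, I would lift a rank-minimum fitting matrix $M^R$ for $G[R]$ to a fitting matrix $M$ for $G$ by padding the $R$-rows with zeros on $I\cup H$, setting $M_i=e_i$ for each unmatched $i\in I$, and setting $M_i=M_h=e_i+e_h$ for each matched pair $(i,h)$ in the crown matching. A direct check, using $N_G(i)\cap R=\emptyset$ and the edge $ih\in E(G)$, shows that $M$ fits $G$; a rank count then yields $\rank(M)=\rank(M^R)+|I|$. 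For the $\ge$ direction, I would take any rank-$r$ fitting matrix $M$ for $G$; since $I$ is independent with $N_G(I)\subseteq H$, each row $M_i$ for $i\in I$ satisfies $M_i|_R=0$ and $M_i|_I=e_i$. A dimension-counting argument on the rows indexed by $R\cup I$ then shows that their linear span has dimension at least $\minrank_\Fset(G[R])+|I|$, using that $M|_{R\times R}$ is itself a fitting matrix for $G[R]$.

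\textbf{Step 3 (Kernel size bound).} After the three rules are applied exhaustively, $G'$ has no isolated vertex, no crown, and maximum matching size at most $k'-1$. The crown decomposition theorem of Chor, Fellows, and Juedes then forces $|V(G')|\le 3(k'-1)=3k'-3$, completing the required bound.

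\textbf{Main obstacle.} The technical heart is the rank identity of Step~2, which must hold uniformly over every field $\Fset$. Both the $\{0,1\}$-matrix construction used for the upper bound and the linear-algebraic dependency count used for the lower bound are field-independent, so no characteristic-specific reasoning is required. With this identity in hand, Step~3 is a routine adaptation of the crown-decomposition analysis carried out for Theorem~\ref{thm:IntroKernelCap}.
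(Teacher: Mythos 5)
Your proposal is correct and follows essentially the same crown-decomposition framework as the paper, which proves Theorem~\ref{thm:IntroKernelMR} by noting that the argument for Theorem~\ref{thm:IntroKernelInd} carries over verbatim once one substitutes the minrank items of Lemmas~\ref{lemma:isolated}, \ref{lemma:matching}, and~\ref{lemma:crown}. The one local variation is in the upper-bound half of the crown identity $\minrank_\Fset(G) \le \minrank_\Fset(G[R]) + |C|$: you build a representing matrix explicitly (padding a minimum-rank matrix for $G[R]$ with $e_i$ rows for unmatched crown vertices and $e_i+e_h$ rows along the crown matching), whereas the paper derives the same bound from subadditivity of minrank over a vertex partition (Lemma~\ref{lemma:union}) together with the matching bound of Lemma~\ref{lemma:matching}; both are valid, and your lower-bound argument via the block structure of the $(C\cup R)\times(C\cup R)$ submatrix is the paper's argument exactly.
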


A key ingredient in the proofs of Theorems~\ref{thm:IntroKernelCap},~\ref{thm:IntroKernelInd}, and~\ref{thm:IntroKernelMR} is the crown decomposition technique due to Chor, Fellows, and Juedes~\cite{Fellows03,ChorFJ04} (see Section~\ref{sec:crown}). Their approach was applied over the years in kernelization algorithms for various problems, e.g., vertex cover, dual coloring, maximum satisfiability, and longest cycle (see, e.g.,~\cite[Section~3]{JacobMR23}). To obtain our results, we essentially borrow the algorithms of~\cite{ChorFJ04} for the vertex cover and dual coloring problems, and analyze them with respect to the graph quantities studied in the present paper.
For a comprehensive introduction to the concept of crown decomposition, the reader is referred to~\cite[Chapter~4]{KernelBook19}.

The graphs produced by our kernelization algorithms are of linear size in the parameter, namely, they have at most $\max(3k-3,0)$ vertices for instances with parameter $k$.
Combining this fact with the double-exponential time algorithms of~\cite{MazumdarMV19} for the storage capacity and index coding problems and with a brute-force exponential time algorithm for minrank, we obtain the following fixed-parameter tractability results. Here and throughout, the $\widetilde{O}$ notation is used to hide poly-logarithmic multiplicative factors.

\begin{theorem}\label{thm:IntroFPT}
Let $q \geq 2$ be an integer, and let $\Fset$ be a finite field.
\begin{enumerate}
  \item\label{itm:FPT1} There exists an algorithm for the $\SC_q$ problem whose running time on a graph on $n$ vertices and an integer $k$ is $n^{O(1)} \cdot \widetilde{O}(1.1996^{q^{3k}})$.
  \item\label{itm:FPT2} There exists an algorithm for the $\DIC_q$ problem whose running time on a graph on $n$ vertices and an integer $k$ is $n^{O(1)} \cdot \widetilde{O}(2^{q^{3k}})$.
  \item\label{itm:FPT3} There exists an algorithm for the $\DMR_\Fset$ problem whose running time on a graph on $n$ vertices and an integer $k$ is $n^{O(1)} \cdot |\Fset|^{9k^2}$.
\end{enumerate}
\end{theorem}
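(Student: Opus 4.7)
The plan is to combine the linear kernels provided by Theorems~\ref{thm:IntroKernelCap},~\ref{thm:IntroKernelInd}, and~\ref{thm:IntroKernelMR} with exact algorithms for the underlying (unparameterized) problems. In all three items, I would start from an input $(G,k)$, run the relevant polynomial-time kernelization to obtain an equivalent instance $(G',k')$ with $n' := |V(G')| \leq \max(3k'-3,0)$ and $k' \leq k$, and then decide the kernel. If $k' \leq 0$ the answer is trivially YES (the full cube works as a storage code, the empty broadcast is a valid index code, etc.); otherwise $n' \leq 3k'-3 \leq 3k$, so it suffices to bound the cost of exactly solving the decision problem on an $n'$-vertex graph.

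For items~\ref{itm:FPT1} and~\ref{itm:FPT2}, I would invoke the double-exponential algorithm of Mazumdar, McGregor, and Vorotnikova~\cite{MazumdarMV19} (recalled as Proposition~\ref{prop:exp_algo}) on the kernel $(G',k')$ to compute $\Capa_q(G')$ and $\Ind_q(G')$, respectively, in time $\widetilde{O}(1.1996^{q^{n'}})$ and $\widetilde{O}(2^{q^{n'}})$. Substituting $n' \leq 3k$ and accounting for the polynomial-time kernelization yields the stated $n^{O(1)} \cdot \widetilde{O}(1.1996^{q^{3k}})$ and $n^{O(1)} \cdot \widetilde{O}(2^{q^{3k}})$ bounds.

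For item~\ref{itm:FPT3} the plan is a straightforward brute force on the kernel, which is available here because a witness for $\minrank_\Fset(G') \leq n'-k'$ is a single $n' \times n'$ matrix. Enumerate every matrix $M \in \Fset^{n' \times n'}$ that is compatible with $G'$, meaning $M_{ii} \neq 0$ for every vertex $i$ and $M_{ij}=0$ for every distinct non-adjacent pair $i,j$; for each such $M$, compute its rank over $\Fset$ by Gaussian elimination in time polynomial in $n'$ and $\log|\Fset|$, and accept iff some $M$ attains rank at most $n'-k'$. The number of compatible matrices is at most $(|\Fset|-1)^{n'} \cdot |\Fset|^{2|E(G')|} \leq |\Fset|^{n'^2} \leq |\Fset|^{(3k)^2} = |\Fset|^{9k^2}$, which, together with the polynomial overhead of the kernelization and of each rank computation, delivers the claimed $n^{O(1)} \cdot |\Fset|^{9k^2}$ bound.

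No substantial obstacle is anticipated, since the argument fits the standard "kernelize, then solve exactly" template. The only points to verify carefully are that Proposition~\ref{prop:exp_algo} is stated with the bases $1.1996$ and $2$ matching the constants quoted in items~\ref{itm:FPT1} and~\ref{itm:FPT2}, and that the polynomial-in-$n'$ factors arising both from the kernelization and from the algorithm applied to the kernel are absorbed within the $n^{O(1)}$ and $\widetilde{O}(\cdot)$ terms.
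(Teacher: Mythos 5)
Your proposal is correct and follows essentially the same "kernelize, then run an exact algorithm on the kernel" strategy as the paper's proof of Theorem~\ref{thm:IntroFPT}. The only cosmetic difference is that for Item~\ref{itm:FPT3} you re-derive the brute-force enumeration of representing matrices instead of directly invoking Item~\ref{alg:MR} of Proposition~\ref{prop:exp_algo}, but that proposition is itself proved by exactly this enumeration, so the arguments coincide.
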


Note that Items~\ref{itm:FPT2} and~\ref{itm:FPT3} of Theorem~\ref{thm:IntroFPT} imply that for every constant integer $k$, it is possible to decide in polynomial time whether a given graph $G$ on $n$ vertices satisfies ${\Ind}_q(G) = n-k$ and whether it satisfies ${\minrank}_\Fset(G) = n-k$, provided that $q \geq 2$ is a fixed integer and that $\Fset$ is a fixed finite field.
In fact, this holds even when $k$ is not a constant but grows sufficiently slowly with $n$, namely, for certain $k = O(\log \log n)$ in the former problem and for any $k = O(\sqrt{\log n})$ in the latter. This significantly extends the aforementioned result of Dau et al.~\cite{DauSC14}, who provided a polynomial-time algorithm for these problems where $k \in \{0,1,2\}$.

To conclude, let us stress that the present paper focuses on the storage capacity and index coding problems for undirected graphs, whereas both problems have natural extensions of interest to directed graphs (see, e.g.,~\cite{Mazumdar15}). It would thus be interesting to decide whether our results can be generalized to the directed setting.
An additional avenue for future research would be to examine the parameterized complexity of these problems with respect to other parameterizations (see, e.g.,~\cite{HR24}).

\subsection{Outline}
The rest of the paper is organized as follows.
In Section~\ref{sec:preliminaries}, we collect several notations, definitions, and tools that will be used throughout the paper.
In Section~\ref{sec:kernels}, we present and analyze the kernelization algorithm for the $\SC_q$, $\DIC_q$, and $\DMR_\Fset$ problems and prove Theorems~\ref{thm:IntroKernelCap},~\ref{thm:IntroKernelInd}, and~\ref{thm:IntroKernelMR}. We also derive there the fixed-parameter tractability of these problems and confirm Theorem~\ref{thm:IntroFPT}.

\section{Preliminaries}\label{sec:preliminaries}

\subsection{Notations}
Throughout this paper, all graphs are undirected and simple.
We let $K_0$ denote the graph with no vertices and no edges.
For a graph $G=(V,E)$ and a vertex $u \in V$, we let $N_G(u)$ denote the set of vertices that are adjacent to $u$ in $G$. The vertex $u$ is said to be isolated in $G$ if it satisfies $N_G(u) = \emptyset$. For a set $S \subseteq V$, we let $G[S]$ stand for the subgraph of $G$ induced by $S$. A matching of $G$ is a set $M \subseteq E$ of pairwise disjoint edges. As usual, we let $\alpha(G)$ and $\chi(G)$ denote the independence and chromatic numbers of $G$ respectively. We also let $\overline{\chi}(G) = \chi(\overline{G})$ denote the clique cover number of $G$.
For an integer $n$, let $[n] = \{1,2,\ldots, n\}$. For a vector $x \in \Sigma^n$ over some alphabet $\Sigma$ and for a set $S \subseteq [n]$, let $x_S$ denote the restriction of $x$ to the entries whose indices lie in $S$.

\subsection{Storage Capacity, Index Coding, and Minrank}

The optimal solutions of the storage capacity and index coding problems can both be characterized using the notion of confusion graphs, introduced in~\cite{BBJK06} and further studied in~\cite{AlonLSWH08,Mazumdar15,MazumdarMV19}, defined as follows.

\begin{definition}[Confusion Graph]
For a graph $G = ([n],E)$ and an integer $q \geq 2$, the {\em confusion graph} of $G$ over $[q]$, denoted by $\Conf_q(G)$, is the graph on the vertex set $[q]^n$, where two distinct vertices $x,y$ are adjacent if for some $i \in [n]$, it holds that $x_i \neq y_i$ and yet $x_{N_G(i)} = y_{N_G(i)}$.
\end{definition}

The storage capacity of a graph $G=([n],E)$ over the alphabet $[q]$ for some $q \geq 2$ is denoted by $\Capa_q(G)$ and is defined as the maximum of $\log_q |C|$, taken over all codes $C \subseteq [q]^n$ such that for every codeword $x \in C$ and for every vertex $i \in [n]$, the value of $x_i$ can be determined from $x_{N_G(i)}$. Observe that two distinct vectors of $[q]^n$ can be members of such a code simultaneously if and only if they are not adjacent in the confusion graph $\Conf_q(G)$. This yields the following characterization of $\Capa_q(G)$.

\begin{proposition}[\cite{Mazumdar15}]\label{prop:CapConf}
For every graph $G$ and for every integer $q \geq 2$, $\Capa_q(G) = \log_q \alpha (\Conf_q(G))$.
\end{proposition}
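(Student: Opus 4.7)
The plan is to show that a subset $C \subseteq [q]^n$ is a valid storage code for $G$ if and only if $C$ is an independent set in the confusion graph $\Conf_q(G)$. Once this is established, the maximum size of a valid code $C$ equals $\alpha(\Conf_q(G))$, and taking $\log_q$ of both sides gives the claim.

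First I would fix a graph $G = ([n],E)$, an integer $q \geq 2$, and recall the two relevant definitions. A code $C \subseteq [q]^n$ is valid for storage capacity if for every $i \in [n]$ there is a function $f_i$ such that $x_i = f_i(x_{N_G(i)})$ for all $x \in C$. Equivalently (and this reformulation is the key translation step), $C$ is valid iff for every $i \in [n]$ and every pair $x,y \in C$, the equality $x_{N_G(i)} = y_{N_G(i)}$ forces $x_i = y_i$.

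For the forward direction, suppose $C$ is valid and assume, for contradiction, that some distinct $x,y \in C$ are adjacent in $\Conf_q(G)$. By definition of the confusion graph there exists an index $i$ with $x_{N_G(i)} = y_{N_G(i)}$ but $x_i \neq y_i$, directly contradicting the reformulation above. Hence $C$ is independent in $\Conf_q(G)$. For the reverse direction, suppose $C$ is independent in $\Conf_q(G)$. Then for any $i$ and any distinct $x,y \in C$ with $x_{N_G(i)} = y_{N_G(i)}$, non-adjacency forces $x_i = y_i$; consequently the value of $x_i$ is well-defined as a function of $x_{N_G(i)}$ on $C$, so the required decoding functions $f_i$ exist and $C$ is a valid code.

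From this equivalence, the maximum of $|C|$ over valid codes equals $\alpha(\Conf_q(G))$, so by definition of $\Capa_q(G)$ we obtain $\Capa_q(G) = \log_q \alpha(\Conf_q(G))$. The proof involves no real obstacle; the only subtlety is being careful that the "function" condition in the definition of a valid code is correctly interpreted as the pairwise consistency condition used in the equivalence, which simply amounts to observing that a relation is (the graph of) a function precisely when it has no two pairs with equal first coordinate and different second coordinates.
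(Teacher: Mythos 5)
Your proof is correct and matches the paper's own (brief) argument: the paper simply notes that two distinct vectors can coexist in a valid storage code if and only if they are non-adjacent in $\Conf_q(G)$, which is exactly the pairwise-consistency equivalence you spell out. You expand the same observation into a full argument, correctly noting that validity of a code is an inherently pairwise condition, so "valid code" and "independent set in $\Conf_q(G)$" coincide and the maximum code size is $\alpha(\Conf_q(G))$.
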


An index code of length $\ell$ over an alphabet $\Sigma$ for a graph $G=([n],E)$ is an encoding function $E:\Sigma^n \rightarrow \Sigma^\ell$ such that for each $i \in [n]$ there exists a decoding function $g_i:\Sigma^{\ell+|N_G(i)|} \rightarrow \Sigma$ satisfying $g_i(E(x),x_{N_G(i)})=x_i$ for all $x \in \Sigma^n$. For an integer $q \geq 2$, let $\Ind_q(G)$ denote the smallest integer $\ell$ for which there exists an index code of length $\ell$ over $[q]$ for $G$.
Observe that an index code over $[q]$ for $G$ can encode two distinct vectors of $[q]^n$ by the same codeword if and only if they are not adjacent in the confusion graph $\Conf_q(G)$. Therefore, every such index code corresponds to a proper coloring of $\Conf_q(G)$, where each color class is associated with a distinct codeword. This yields the following characterization of $\Ind_q(G)$ (see, e.g.,~\cite{BBJK06,AlonLSWH08}).

\begin{proposition}\label{prop:IndConf}
For every graph $G$ and for every integer $q \geq 2$, $\Ind_q(G) = \lceil \log_q \chi (\Conf_q(G)) \rceil$.
\end{proposition}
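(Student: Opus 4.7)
The plan is to establish the stated equality by proving the two inequalities separately, exploiting the criterion for adjacency in $\Conf_q(G)$ recalled above. The conceptual content is that an index code is nothing but an assignment of codewords to messages, and the correctness condition forces any two vectors adjacent in $\Conf_q(G)$ to receive distinct codewords — which is exactly the defining property of a proper coloring. Converting between a coloring with $\chi(\Conf_q(G))$ colors and an encoding into $[q]^\ell$ introduces the ceiling on the right-hand side, since a codeword alphabet of size $q^\ell$ is a power of $q$, whereas $\chi$ need not be.

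For the lower bound, I would take an optimal index code $E: [q]^n \to [q]^\ell$ for $G$ of length $\ell = \Ind_q(G)$ with decoding functions $g_i$, and show that $x \mapsto E(x)$ is a proper coloring of $\Conf_q(G)$ with colors in $[q]^\ell$. Indeed, if $x$ and $y$ are adjacent in $\Conf_q(G)$, then by definition there is some $i \in [n]$ with $x_i \neq y_i$ and $x_{N_G(i)} = y_{N_G(i)}$; had $E(x) = E(y)$ held, then $x_i = g_i(E(x), x_{N_G(i)}) = g_i(E(y), y_{N_G(i)}) = y_i$ would contradict $x_i \neq y_i$. Thus $E(x) \neq E(y)$, so the coloring uses at most $q^\ell$ colors, yielding $\chi(\Conf_q(G)) \leq q^\ell$ and hence $\ell \geq \lceil \log_q \chi(\Conf_q(G)) \rceil$.

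For the upper bound, I would let $c: [q]^n \to [\chi(\Conf_q(G))]$ be an optimal proper coloring, set $\ell = \lceil \log_q \chi(\Conf_q(G)) \rceil$, fix any injection $\phi: [\chi(\Conf_q(G))] \to [q]^\ell$, and define the encoding $E(x) = \phi(c(x))$. To construct each decoder $g_i$ on a pair $(w, s) \in [q]^\ell \times [q]^{|N_G(i)|}$, the key observation is that any two vectors $x, y \in [q]^n$ satisfying $E(x) = E(y) = w$ and $x_{N_G(i)} = y_{N_G(i)} = s$ must have $x_i = y_i$: otherwise $x$ and $y$ would be adjacent in $\Conf_q(G)$ yet share the color $\phi^{-1}(w) = c(x) = c(y)$, contradicting properness. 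Hence one may set $g_i(w, s)$ to this common value (and arbitrarily when no consistent preimage exists), giving a valid index code of length $\ell$ and thus $\Ind_q(G) \leq \lceil \log_q \chi(\Conf_q(G)) \rceil$. The argument is essentially a translation between combinatorial and information-theoretic objects and presents no serious obstacle; the only point requiring some care is the handling of the ceiling via the injection $\phi$, which also explains why the equality involves a ceiling while the analogous identity for $\Capa_q$ in Proposition~\ref{prop:CapConf} does not.
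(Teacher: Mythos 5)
Your proof is correct and takes essentially the same approach that the paper sketches informally just before stating the proposition: index codes over $[q]$ correspond exactly to proper colorings of $\Conf_q(G)$, with the ceiling arising from packing $\chi(\Conf_q(G))$ colors into $[q]^\ell$. The paper leaves this as an observation with a citation; your write-up simply makes both directions of the correspondence explicit and is entirely in line with the intended argument.
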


The following lemma gives a well-known upper bound on $\Ind_q(G)$.
\begin{lemma}[\cite{BirkKol98}]\label{lemma:IndCliqueCover}
For every graph $G$ and for every integer $q \geq 2$, $\Ind_q(G) \leq \overline{\chi}(G)$.
\end{lemma}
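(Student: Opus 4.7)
The plan is to exhibit an explicit index code over $[q]$ for $G$ whose length equals the clique cover number $\overline{\chi}(G)$, using the standard ``XOR over each clique'' construction that goes back to Birk and Kol.

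Concretely, I would first fix an optimal clique cover of $G$, that is, a partition $V(G) = C_1 \cup C_2 \cup \cdots \cup C_t$ with $t = \overline{\chi}(G)$ such that each $C_j$ induces a clique in $G$. Identify $[q]$ with $\Zset/q\Zset$ and define an encoding function $E : [q]^n \to [q]^t$ by
\[
  E(x)_j \;=\; \sum_{i \in C_j} x_i \pmod{q}, \qquad j \in [t].
\]
Thus the broadcast transmits one symbol per clique, for a total length of $t = \overline{\chi}(G)$.

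Next I would verify that every receiver can decode its own symbol. Fix $i \in [n]$ and let $j$ be the unique index with $i \in C_j$. Since $C_j$ is a clique of $G$, every vertex in $C_j \setminus \{i\}$ is adjacent to $i$, so $C_j \setminus \{i\} \subseteq N_G(i)$. Consequently the side information $x_{N_G(i)}$ available to $R_i$ determines $\sum_{i' \in C_j \setminus \{i\}} x_{i'}$, and $R_i$ recovers
\[
  x_i \;=\; E(x)_j - \!\!\sum_{i' \in C_j \setminus \{i\}}\!\! x_{i'} \pmod{q}.
\]
This defines a valid decoding function $g_i$, so $E$ is an index code of length $t$ over $[q]$ for $G$, and therefore $\Ind_q(G) \leq t = \overline{\chi}(G)$.

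There is no real obstacle here: the only ingredient beyond the definitions is the abelian group structure on $[q]$, which is available for every $q \geq 2$ via $\Zset/q\Zset$, and the observation that a clique in $G$ lies in the closed neighborhood of each of its vertices. The choice of clique cover is arbitrary among the optimal ones, and the argument does not need any further hypothesis on $q$ or on $G$.
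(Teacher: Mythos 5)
Your proof is correct and follows essentially the same approach as the paper: take an optimal clique cover, transmit the sum modulo $q$ of the symbols over each clique, and decode using the fact that each receiver knows all other symbols in its own clique. The paper's version is slightly more terse but relies on the identical construction and observation.
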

\begin{proof}
For a graph $G=([n],E)$, let $\calC$ be a clique cover of $G$ of minimum size.
Consider the index code that encodes any vector $x \in [q]^n$ by the vector of length $|\calC|$ that for each clique $C \in \calC$ includes the sum $\sum_{i \in C}{x_i}$ modulo $q$ (represented by an element of $[q]$). Observe that for each $i \in [n]$, the value of $x_i$ can be recovered given $x_{N_G(i)}$ and given the value associated with the clique of $\calC$ that includes the vertex $i$. This implies that $\Ind_q(G) \leq |\calC| = \overline{\chi}(G)$, as required.
\end{proof}

The following lemma presents one direction of the duality, shown in~\cite{Mazumdar15}, between storage capacity and index coding.

\begin{lemma}[\cite{Mazumdar15}]\label{lemma:DualityIndCap}
For every graph $G$ on $n$ vertices and for every integer $q \geq 2$, it holds that
\[{\Ind}_q(G) \geq n-{\Capa}_q(G).\]
\end{lemma}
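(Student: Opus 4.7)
The plan is to combine the confusion-graph characterizations of Propositions~\ref{prop:CapConf} and~\ref{prop:IndConf} with the elementary inequality $\chi(H)\cdot\alpha(H)\geq |V(H)|$, applied to $H=\Conf_q(G)$.

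First I would recall that $\Conf_q(G)$ has exactly $q^n$ vertices, and observe the standard fact that in any graph $H$, a proper coloring with $\chi(H)$ colors partitions $V(H)$ into $\chi(H)$ independent sets, each of size at most $\alpha(H)$; hence
\[
\chi(\Conf_q(G))\cdot \alpha(\Conf_q(G))\;\geq\;q^n.
\]
Taking $\log_q$ of both sides gives $\log_q \chi(\Conf_q(G)) + \log_q \alpha(\Conf_q(G))\geq n$, which by Proposition~\ref{prop:CapConf} rewrites as
\[
\log_q \chi(\Conf_q(G))\;\geq\;n-\Capa_q(G).
\]

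Next I would pass to the integer quantity $\Ind_q(G)$ using Proposition~\ref{prop:IndConf}. Taking ceilings of the displayed inequality and using that $n$ is an integer, so $\lceil n-\Capa_q(G)\rceil\geq n-\Capa_q(G)$, yields
\[
\Ind_q(G)\;=\;\lceil\log_q \chi(\Conf_q(G))\rceil\;\geq\;\lceil n-\Capa_q(G)\rceil\;\geq\;n-\Capa_q(G),
\]
as claimed.

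There is essentially no obstacle here: the entire argument is a two-line chain once the characterizations from Propositions~\ref{prop:CapConf} and~\ref{prop:IndConf} are in hand. The only subtlety worth flagging in the write-up is the role of the ceiling in the definition of $\Ind_q(G)$, which I would handle by the trivial observation that $\lceil n-x\rceil\geq n-x$ for any real $x$ when $n$ is an integer.
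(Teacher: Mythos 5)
Your proof is correct and follows essentially the same route as the paper: apply Propositions~\ref{prop:CapConf} and~\ref{prop:IndConf} together with the elementary bound $\alpha(H)\cdot\chi(H)\ge |V_H|$ for $H=\Conf_q(G)$. The paper streamlines the final step by using $\Ind_q(G)\ge\log_q\chi(\Conf_q(G))$ directly (the ceiling only helps), whereas you make the ceiling bookkeeping explicit, but the argument is the same.
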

\begin{proof}
Let $G$ be a graph on $n$ vertices, and fix some integer $q \geq 2$.
By combining Propositions~\ref{prop:CapConf} and~\ref{prop:IndConf} with the fact that every graph $H=(V_H,E_H)$ satisfies $\alpha(H) \cdot \chi(H) \geq |V_H|$, we obtain that
\[ {\Ind}_q(G) \geq \log_q \chi(\Conf_q(G)) \geq \log_q \Big (\frac{q^{n}}{\alpha(\Conf_q(G))} \Big )= n-{\Capa}_q(G),\]
so we are done.
\end{proof}

We next define the minrank parameter of graphs, introduced by Haemers~\cite{Haemers78}.

\begin{definition}[Minrank]\label{def:minrank}
Let $G=([n],E)$ be a graph, and let $\Fset$ be a field.
We say that a matrix $A \in \Fset^{n \times n}$ {\em represents} $G$ over $\Fset$ if $A_{i,i} \neq 0$ for every $i \in [n]$, and $A_{i,j}=0$ for every pair $i,j \in [n]$ of distinct vertices that are not adjacent in $G$.
The {\em minrank} of $G$ over $\Fset$ is defined by
\[{\minrank}_\Fset(G) =  \min\{{\rank}_{\Fset}(A)\mid A \mbox{ represents }G\mbox{ over }\Fset\}.\]
\end{definition}

It is easy to see that the clique cover number forms an upper bound on minrank.

\begin{lemma}[\cite{Haemers78}]\label{lemma:MRCliqueCover}
For every graph $G$ and for every field $\Fset$, ${\minrank}_\Fset(G) \leq \overline{\chi}(G)$.
\end{lemma}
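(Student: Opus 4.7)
The plan is to construct, from a minimum clique cover of $G$, an explicit matrix that represents $G$ and whose rank is at most the number of cliques used. First I would let $\calC$ be a partition of $[n]$ into $\overline{\chi}(G)$ cliques of $G$, which exists by definition of $\overline{\chi}(G) = \chi(\overline{G})$.

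Next, I would define $A \in \Fset^{n \times n}$ by setting $A_{i,j} = 1$ (the multiplicative identity of $\Fset$) whenever $i$ and $j$ belong to the same clique of $\calC$, and $A_{i,j} = 0$ otherwise. Verifying that $A$ represents $G$ over $\Fset$ is immediate: every vertex lies in some clique of $\calC$, so $A_{i,i} = 1 \neq 0$; and if $i \neq j$ are non-adjacent in $G$ they cannot belong to a common clique of $\calC$, forcing $A_{i,j} = 0$.

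To bound the rank, I would observe that $A$ decomposes as a sum of $|\calC|$ rank-one matrices: writing $\mathbf{1}_C \in \Fset^n$ for the indicator vector of a clique $C \in \calC$, one has $A = \sum_{C \in \calC} \mathbf{1}_C \mathbf{1}_C^{T}$, since the $(i,j)$ entry of the right-hand side counts the number of cliques of $\calC$ containing both $i$ and $j$, which is $1$ when $i$ and $j$ share a clique and $0$ otherwise. Hence $\rank_{\Fset}(A) \leq |\calC| = \overline{\chi}(G)$, and by the definition of minrank the lemma follows. There is no substantial obstacle here; the only detail worth flagging is that the diagonal of the representing matrix must be nonzero, which is built into the $\{0,1\}$-construction above.
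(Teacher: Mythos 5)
Your proof is correct and follows essentially the same approach as the paper: both construct the block all-ones matrix $A$ from a minimum clique cover (taken as a partition) and observe that it represents $G$ with rank at most $|\calC| = \overline{\chi}(G)$. The only cosmetic difference is that you justify the rank bound via the rank-one decomposition $A = \sum_{C \in \calC} \mathbf{1}_C \mathbf{1}_C^{T}$, whereas the paper simply notes the (equally evident) block-diagonal structure giving ${\rank}_\Fset(A) = |\calC|$.
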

\begin{proof}
For a graph $G=([n],E)$, let $\calC$ be a clique cover of $G$ of minimum size.
Consider the matrix $A \in \Fset^{n \times n}$ defined by $A_{i,j} = 1$ if $i$ and $j$ are (possibly equal) members of the same clique of $\calC$, and $A_{i,j}=0$ otherwise.
Observe that $A$ represents the graph $G$ over $\Fset$ and that ${\rank}_\Fset(A)=|\calC|$. This implies that ${\minrank}_\Fset(G) \leq |\calC| = \overline{\chi}(G)$, as required.
\end{proof}

We will need the following simple lemma.

\begin{lemma}\label{lemma:union}
Let $G = (V,E)$ be a graph, and let $V= V_1 \cup V_2$ be a partition of its vertex set. Then, for every integer $q \geq 2$,
${\Capa}_q(G) \geq {\Capa}_q(G[V_1])+{\Capa}_q(G[V_2])$ and ${\Ind}_q(G) \leq {\Ind}_q(G[V_1])+{\Ind}_q(G[V_2])$, and for every field $\Fset$, ${\minrank}_\Fset(G) \leq {\minrank}_\Fset(G[V_1])+{\minrank}_\Fset(G[V_2])$.
\end{lemma}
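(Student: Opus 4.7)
The three inequalities can all be obtained by a constructive combination of optimal solutions for the two induced subgraphs. The only structural fact used is that for any vertex $v \in V_i$ one has $N_{G[V_i]}(v) = N_G(v) \cap V_i \subseteq N_G(v)$, and conversely any pair of non-adjacent vertices within a single part remains non-adjacent in $G$.

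For the storage capacity inequality, the plan is to take a product code. Let $C_1 \subseteq [q]^{V_1}$ and $C_2 \subseteq [q]^{V_2}$ be codes achieving $\Capa_q(G[V_1])$ and $\Capa_q(G[V_2])$ respectively, and set $C = \set{(x_1,x_2) \in [q]^V}{x_1 \in C_1,\ x_2 \in C_2}$, where coordinates are interpreted with respect to the partition $V = V_1 \cup V_2$. For any $i \in V_1$ and any $(x_1,x_2) \in C$, the value $(x_1)_i$ is recoverable from $(x_1)_{N_{G[V_1]}(i)}$, which is a subcollection of the entries in $(x_1,x_2)_{N_G(i)}$; the symmetric argument handles $i \in V_2$. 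Hence $C$ is a valid storage code for $G$, and $\log_q |C| = \log_q |C_1| + \log_q |C_2|$ gives the desired bound.

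For the index coding inequality, I would concatenate optimal codes. Let $E_j : [q]^{V_j} \to [q]^{\ell_j}$ be an index code for $G[V_j]$ of length $\ell_j = \Ind_q(G[V_j])$, and define $E : [q]^V \to [q]^{\ell_1+\ell_2}$ by $E(x) = (E_1(x_{V_1}), E_2(x_{V_2}))$. For every $i \in V_j$, the receiver $R_i$ has access to $x_{N_G(i)}$, in particular to $x_{N_{G[V_j]}(i)}$, and using the relevant part of the broadcast $E(x)$ can recover $x_i$ via the decoding function provided by $E_j$. This yields an index code of length $\ell_1+\ell_2$ for $G$.

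For the minrank inequality, I would take a block-diagonal matrix. Let $A_j \in \Fset^{V_j \times V_j}$ represent $G[V_j]$ over $\Fset$ with $\rank_\Fset(A_j) = \minrank_\Fset(G[V_j])$, and consider
\[ A = \begin{pmatrix} A_1 & 0 \\ 0 & A_2 \end{pmatrix} \in \Fset^{V \times V}. \]
The diagonal entries of $A$ are all non-zero since those of $A_1$ and $A_2$ are. For distinct non-adjacent $i,j \in V$, if both lie in the same $V_\ell$, then they are non-adjacent in $G[V_\ell]$ and so $(A_\ell)_{i,j} = 0$; if they lie in different parts, then the corresponding entry of $A$ is $0$ by construction. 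Therefore $A$ represents $G$ over $\Fset$, and $\rank_\Fset(A) = \rank_\Fset(A_1) + \rank_\Fset(A_2)$ gives the claim. The main (mild) point to verify carefully is the representation condition for the off-block zeros of $A$, which as noted above is automatic because the defining constraint only requires zero entries at non-edges and imposes no constraint at edges.
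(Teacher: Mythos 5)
Your proof is correct, and the underlying ideas coincide with the paper's: a product construction for storage capacity, concatenation of codewords for index coding, and a block-diagonal matrix for minrank. The one stylistic difference is that you work directly with the operational definitions (codes and encoding functions), whereas the paper phrases the first two claims through the confusion-graph characterizations $\Capa_q = \log_q \alpha(\Conf_q(\cdot))$ and $\Ind_q = \lceil \log_q \chi(\Conf_q(\cdot)) \rceil$, showing that concatenating independent sets (resp.\ color classes) of $\Conf_q(G[V_1])$ and $\Conf_q(G[V_2])$ yields independent sets (resp.\ color classes) of $\Conf_q(G)$. Your direct route is slightly cleaner for the index-coding bound: you immediately obtain an index code of length $\ell_1 + \ell_2$, whereas the paper's derivation must separately justify a ceiling-function manipulation, $\lceil a + b \rceil \leq \lceil a \rceil + \lceil b \rceil$, after multiplying chromatic numbers. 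The minrank argument is identical in both.
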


\begin{proof}
Observe that if $I_1$ and $I_2$ are independent sets in the confusion graphs $\Conf_q(G[V_1])$ and $\Conf_q(G[V_2])$ respectively, then the set $I_1 \circ I_2$ of all the vectors obtained by concatenating a vector of $I_1$ to a vector of $I_2$ forms an independent set in the confusion graph $\Conf_q(G)$. This implies that $\alpha(\Conf_q(G)) \geq \alpha(\Conf_q(G[V_1]))  \cdot \alpha(\Conf_q(G[V_2]))$, which using Proposition~\ref{prop:CapConf}, yields that
\[{\Capa}_q(G) \geq {\Capa}_q(G[V_1])+{\Capa}_q(G[V_2]).\]
Observe further that if $\calI_1$ and $\calI_2$ are collections of color classes of proper colorings of $\Conf_q(G[V_1])$ and $\Conf_q(G[V_2])$ respectively, then the collection $\{I_1 \circ I_2 \mid I_1 \in \calI_1,~I_2 \in \calI_2\}$ forms a partition of the vertex set of $\Conf_q(G)$ into independent sets, hence $\chi(\Conf_q(G)) \leq \chi(\Conf_q(G[V_1]))  \cdot \chi(\Conf_q(G[V_2]))$. Using Proposition~\ref{prop:IndConf}, it follows that
\begin{eqnarray*}
{\Ind}_q(G) &=& \lceil \log_q \chi(\Conf_q(G)) \rceil \leq \lceil \log_q \chi(\Conf_q(G[V_1])) + \log_q \chi(\Conf_q(G[V_2])) \rceil \\
&\leq& \lceil \log_q \chi(\Conf_q(G[V_1]))\rceil + \lceil \log_q \chi(\Conf_q(G[V_2])) \rceil = {\Ind}_q(G[V_1])+{\Ind}_q(G[V_2]).
\end{eqnarray*}
Finally, observe that if $A_1$ and $A_2$ are matrices that represent over a field $\Fset$ the graphs $G[V_1]$ and $G[V_2]$ respectively, then the $2 \times 2$ block matrix $A$ that has the matrices $A_1$ and $A_2$ as diagonal blocks and zeros elsewhere, represents $G$ over $\Fset$ and satisfies ${\rank}_\Fset(A) = {\rank}_\Fset(A_1)+{\rank}_\Fset(A_2)$. This implies that ${\minrank}_\Fset(G) \leq {\minrank}_\Fset(G[V_1])+{\minrank}_\Fset(G[V_2])$, and we are done.
\end{proof}

The following proposition provides algorithms for exactly determining the $\Capa_q(G)$, $\Ind_q(G)$, and ${\minrank}_\Fset(G)$ quantities for an input graph $G$, provided that $q \geq 2$ and that $\Fset$ is a finite field.

\begin{proposition}\label{prop:exp_algo}
Let $q \geq 2$ be an integer, and let $\Fset$ be a finite field.
\begin{enumerate}
  \item\label{alg:C} There exists an algorithm that given a graph $G$ on $n$ vertices determines $\Capa_q(G)$ in time $\widetilde{O}(1.1996^{q^n})$.
  \item\label{alg:I} There exists an algorithm that given a graph $G$ on $n$ vertices determines $\Ind_q(G)$ in time $\widetilde{O}(2^{q^n})$.
  \item\label{alg:MR} There exists an algorithm that given a graph $G$ on $n$ vertices determines ${\minrank}_\Fset(G)$ in time $\widetilde{O} (|\Fset|^{n^2})$.
\end{enumerate}
\end{proposition}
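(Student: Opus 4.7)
The plan is to reduce Items~\ref{alg:C} and~\ref{alg:I} to standard exponential-time algorithms for maximum independent set and chromatic number, respectively, applied to the confusion graph, and to handle Item~\ref{alg:MR} by brute-force enumeration of representing matrices. In each case the preprocessing is polynomial in the input plus the size of the combinatorial object being explored, and will be absorbed into the target $\widetilde{O}$ bounds.

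For Items~\ref{alg:C} and~\ref{alg:I}, I would first construct $\Conf_q(G)$ explicitly: it has $N := q^n$ vertices, and whether a pair $x, y \in [q]^n$ is an edge can be tested in time $\poly(n)$ by scanning each $i \in [n]$ and checking the condition $x_i \neq y_i$ and $x_{N_G(i)} = y_{N_G(i)}$. Thus $\Conf_q(G)$ is constructible in time $\widetilde{O}(q^{2n})$, which is dominated by the subsequent search. For Item~\ref{alg:C}, invoke the Xiao--Nagamochi $O(1.1996^N)$ algorithm for maximum independent set to obtain $\alpha(\Conf_q(G))$, and return $\log_q \alpha(\Conf_q(G))$, which equals $\Capa_q(G)$ by Proposition~\ref{prop:CapConf}. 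For Item~\ref{alg:I}, invoke the Bj\"orklund--Husfeldt--Koivisto $O(2^N)$ inclusion--exclusion algorithm for the chromatic number to obtain $\chi(\Conf_q(G))$, and return $\lceil \log_q \chi(\Conf_q(G)) \rceil$, which equals $\Ind_q(G)$ by Proposition~\ref{prop:IndConf}.

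For Item~\ref{alg:MR}, I would enumerate all $|\Fset|^{n^2}$ matrices $A \in \Fset^{n \times n}$, check in time $\poly(n)$ whether $A$ satisfies the representation conditions of Definition~\ref{def:minrank}, compute ${\rank}_\Fset(A)$ via Gaussian elimination when it does, and return the minimum rank found. The total time is $\widetilde{O}(|\Fset|^{n^2})$. There is no real obstacle here: the entire proposition is a direct consequence of the characterizations and definitions already established in the excerpt, combined with known exponential-time algorithms for the relevant classical graph problems. The only care required is in the $\widetilde{O}$ accounting, which easily absorbs the polynomial overhead of graph construction, edge testing, matrix verification, and rank computation.
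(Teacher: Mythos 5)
Your proposal matches the paper's argument exactly: construct $\Conf_q(G)$, apply the Xiao--Nagamochi maximum-independent-set algorithm and the Bj\"orklund--Husfeldt--Koivisto chromatic-number algorithm together with Propositions~\ref{prop:CapConf} and~\ref{prop:IndConf} for Items~\ref{alg:C} and~\ref{alg:I}, and brute-force over all $n\times n$ matrices over $\Fset$ for Item~\ref{alg:MR}. The accounting for the polynomial overhead of graph construction, edge tests, and rank computation is also handled as the paper intends, so there is nothing to add.
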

\noindent
The first two items of Proposition~\ref{prop:exp_algo}, given in~\cite[Theorem~2]{MazumdarMV19}, follow by constructing the confusion graph $\Conf_q(G)$ and calling known algorithms for computing the independence and chromatic numbers~\cite{XiaoN17,BjorklundHK09}. The third item follows by simply enumerating all the $n \times n$ matrices over $\Fset$ and finding a matrix of minimum rank that represents $G$.

\subsection{Crown Decomposition}\label{sec:crown}

Our kernelization algorithms use the notion of crown decomposition, introduced in~\cite{Fellows03,ChorFJ04} and defined as follows (see also~\cite[Chapter~4]{KernelBook19}). For an illustration, see Figure~\ref{fig:crown}.
\begin{definition}[Crown Decomposition]\label{def:crown}
A {\em crown decomposition} of a graph $G=(V,E)$ is a partition of its vertex set into three sets $V = C \cup H \cup R$ where $C,H \neq \emptyset$, such that
\begin{enumerate}
  \item $C$ is an independent set,
  \item $H$ separates $C$ and $R$, that is, no edge of $G$ has one endpoint in $C$ and one endpoint in $R$, and
  \item there exists a matching of $H$ into $C$, that is, the subgraph of $G$ spanned by the edges that have one endpoint in $H$ and one endpoint in $C$ admits a matching of size $|H|$.
\end{enumerate}
\end{definition}

\begin{figure}[h]
\center{\epsfxsize=2.5in\epsfbox{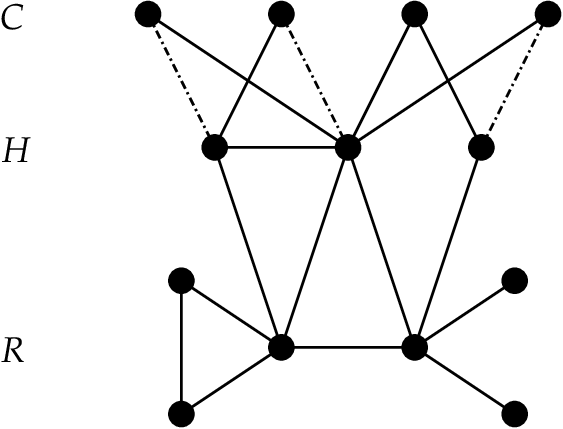}}
 \caption{A crown decomposition. The vertices of $C$ (Crown) form an independent set, and they are separated by the vertices of $H$ (Head) from those of $R$ (Royal body). The marked edges form a matching of $H$ into $C$.}
 \label{fig:crown}
\end{figure}

The following theorem, essentially proved in~\cite{Fellows03,ChorFJ04}, shows that given a sufficiently large graph with no isolated vertices, it is possible to efficiently find either a matching of a prescribed size or a crown decomposition (see also~\cite[Lemma~4.5]{KernelBook19}).
\begin{theorem}[\cite{Fellows03,ChorFJ04}]\label{thm:crown_algo}
There exists a polynomial-time algorithm, called $\Crown$, that given a graph $G$ and an integer $k \geq 1$, such that $G$ has at least $3k-2$ vertices and has no isolated vertices, finds either a matching of $G$ of size $k$ or a crown decomposition of $G$.
\end{theorem}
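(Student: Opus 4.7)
The plan is to reduce the problem to bipartite matching and then extract a crown using the alternating-path structure that underlies König's theorem. First, I would compute a maximum matching $M$ of $G$ via Edmonds' blossom algorithm, which runs in polynomial time. If $|M|\geq k$, any $k$ edges of $M$ form the required matching. Otherwise $|M|\leq k-1$, and the set $I=V\setminus V(M)$ is independent (else $M$ would admit an augmenting edge) and has size $|I| = n - 2|M| \geq (3k-2)-(2k-2) = k \geq 1$. The no-isolated-vertex assumption forces every $v\in I$ to have a neighbor, and that neighbor must lie in $V(M)$, so $N(I)$ is a nonempty subset of $V(M)$.

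Next, I would form the bipartite subgraph $B$ with sides $I$ and $N(I)$ consisting of exactly the $G$-edges between them, and compute a maximum matching $M_B$ of $B$. Since $M_B$ is also a matching of $G$, $|M_B|\leq |M|\leq k-1$, so the set $U$ of $I$-vertices left unsaturated by $M_B$ has size $|U|\geq |I|-|M_B|\geq 1$. Starting from $U$, perform a BFS that traces alternating walks whose first edge is a non-$M_B$ edge and whose subsequent edges alternate between $M_B$ and non-$M_B$ edges. Let $Z$ be the set of vertices reached, and set $C = Z\cap I$, $H = Z\cap N(I)$, and $R = V\setminus(C\cup H)$.

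What remains is to check that $(C,H,R)$ satisfies Definition~\ref{def:crown}. Independence of $C$ is immediate from $C\subseteq I$, and $C\supseteq U$ is nonempty. Every $u\in U$ has a neighbor in $N(I)$, reached by a length-one alternating walk, so $H$ is nonempty as well. By the maximality of $M_B$, no $h\in H$ can be $M_B$-unsaturated (otherwise the walk from $U$ to $h$ would be an $M_B$-augmenting path), and the structure of the walks forces the $M_B$-partner of each $h\in H$ to lie in $C$; thus $M_B$ restricted to $H$ provides a matching of $H$ into $C$. For separation, suppose some $c\in C$ has a neighbor $v\in N(I)\setminus H$: the edge $cv$ cannot be in $M_B$ (else $v$ would be the $M_B$-partner of $c$ and hence lie in $H$), so appending $cv$ to an alternating walk from $U$ to $c$ places $v$ in $Z$, a contradiction; and $C$ has no neighbors inside $I\setminus C$ because $I$ is independent.

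The main obstacle, and the step that most demands care, is the simultaneous verification of the matching-into-$C$ and separation properties, both of which hinge on the absence of $M_B$-augmenting paths from $U$, i.e., the same observation underlying König's theorem. Once this is in place, the polynomial running time of the algorithm follows immediately from the polynomial-time procedures for maximum bipartite matching and BFS.
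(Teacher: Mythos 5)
Your proof is correct and follows the standard argument for the crown decomposition lemma (e.g., Lemma 4.5 in the Kernelization book cited by the paper): compute a maximum matching, take the complementary independent set $I$, find a maximum matching in the auxiliary bipartite graph between $I$ and $N(I)$, and extract $C$ and $H$ from the set of vertices reachable by alternating paths from the $I$-side unmatched vertices, which is precisely the K\H{o}nig-type construction. The paper does not reproduce a proof of this cited theorem, but your argument is the one it implicitly relies on.
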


\section{Kernels for Storage Capacity and Dual Index Coding}\label{sec:kernels}

In this section, we present and analyze the kernelization algorithm for the $\SC_q$, $\DIC_q$, and $\DMR_\Fset$ problems and prove Theorems~\ref{thm:IntroKernelCap},~\ref{thm:IntroKernelInd}, and~\ref{thm:IntroKernelMR}. We also derive the fixed-parameter tractability of these problems and confirm Theorem~\ref{thm:IntroFPT}.

\subsection{Reduction Rules}

Our kernelization algorithm uses two reduction rules, iteratively applied to decrease the size of the graph at hand.
The following lemmas analyze the effect of these rules on the graph quantities considered in the present paper.
The first reduction rule concerns removing isolated vertices from the graph.

\begin{lemma}\label{lemma:isolated}
Let $G=(V,E)$ be a graph with an isolated vertex $u \in V$, and let $G' = G[V \setminus \{u\}]$.
Then
\begin{enumerate}
  \item\label{itm:CapI} for every integer $q \geq 2$, $\Capa_q(G) = \Capa_q(G')$,
  \item\label{itm:IndI} for every integer $q \geq 2$, $\Ind_q(G) = \Ind_q(G')+1$, and
  \item\label{itm:minrkI} for every field $\Fset$, ${\minrank}_\Fset(G) = {\minrank}_\Fset(G')+1$.
\end{enumerate}
\end{lemma}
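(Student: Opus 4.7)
The plan is to prove the three items by working with the corresponding structural characterizations: Proposition~\ref{prop:CapConf} and Proposition~\ref{prop:IndConf} for items~\ref{itm:CapI} and~\ref{itm:IndI}, and Definition~\ref{def:minrank} directly for item~\ref{itm:minrkI}. In all three cases, one direction is already supplied by Lemma~\ref{lemma:union} applied to the partition $V = (V\setminus\{u\}) \cup \{u\}$, using that a single-vertex graph has storage capacity $0$, index coding length $1$, and minrank $1$; what remains is the reverse inequality, which is where the isolation of $u$ is genuinely used.

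The key structural observation about the confusion graph, to be recorded first, is that since $N_G(u) = \emptyset$, the pair $(i,u)$ triggers the confusion-graph adjacency condition for any two vectors that merely differ in coordinate $u$; moreover, since $u$ is absent from every other neighborhood, $N_{G'}(i) = N_G(i)$ for all $i \neq u$, so two vectors agreeing on coordinate $u$ are adjacent in $\Conf_q(G)$ iff their projections onto $V\setminus\{u\}$ are adjacent in $\Conf_q(G')$. Thus $\Conf_q(G)$ is the graph join of $q$ disjoint copies of $\Conf_q(G')$, one per value of $x_u$.

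From this join structure, item~\ref{itm:CapI} follows because any independent set must lie entirely in a single layer and project injectively to an independent set of $\Conf_q(G')$, giving $\alpha(\Conf_q(G)) \leq \alpha(\Conf_q(G'))$; combined with Proposition~\ref{prop:CapConf} this yields the matching upper bound on $\Capa_q$. For item~\ref{itm:IndI}, the same join structure forces $\chi(\Conf_q(G)) = q \cdot \chi(\Conf_q(G'))$: vectors in different layers are pairwise adjacent so their color sets must be disjoint, and each layer alone already needs $\chi(\Conf_q(G'))$ colors. Applying Proposition~\ref{prop:IndConf} and using that $\log_q q = 1$ is an integer (so it passes through the ceiling cleanly) gives $\Ind_q(G) = 1 + \Ind_q(G')$.

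For item~\ref{itm:minrkI}, let $A$ be a minrank representation of $G$ over $\Fset$. Isolation of $u$ forces $A_{u,j}=0$ and $A_{i,u}=0$ for all $i,j \neq u$, while $A_{u,u} \neq 0$, so after permuting $u$ to the last coordinate, $A$ is block-diagonal with an $(n-1)\times(n-1)$ block $A'$ that represents $G'$ and a $1\times 1$ block $(A_{u,u})$. Hence $\rank_\Fset(A) = \rank_\Fset(A') + 1 \geq \minrank_\Fset(G') + 1$, and minimizing over $A$ gives the reverse inequality. I do not expect any serious obstacle; the only small subtlety is the ceiling in Proposition~\ref{prop:IndConf}, which is harmless here precisely because the extra factor contributed by the isolated vertex is exactly $q$.
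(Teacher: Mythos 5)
Your proposal is correct and follows essentially the same approach as the paper: both identify that $\Conf_q(G)$ is the join of $q$ disjoint copies of $\Conf_q(G')$ (one per value of the $u$-coordinate), from which the independence number and chromatic number relations follow, and both use the forced block-diagonal structure of any representing matrix for the minrank item. Your invocation of Lemma~\ref{lemma:union} for the easy inequalities and the explicit remark about the ceiling in Proposition~\ref{prop:IndConf} are minor presentational variations of the same argument.
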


\begin{proof}
For Items~\ref{itm:CapI} and~\ref{itm:IndI}, fix some integer $q \geq 2$, and consider the confusion graphs $\Conf_q(G)$ and $\Conf_q(G')$. Observe that $\Conf_q(G')$ is isomorphic to the subgraph of $\Conf_q(G)$ induced by the vertices that have some fixed value in the entry associated with the vertex $u$. Since $u$ is isolated in $G$, every two vertices of $\Conf_q(G)$ with distinct values in the entry associated with $u$ are adjacent. It therefore follows that $\Conf_q(G)$ consists of $q$ disjoint copies of $\Conf_q(G')$, with all the possible edges between vertices of distinct copies. This implies that $\alpha(\Conf_q(G)) = \alpha(\Conf_q(G'))$ and $\chi(\Conf_q(G)) = \chi(\Conf_q(G')) \cdot q$. By Propositions~\ref{prop:CapConf} and~\ref{prop:IndConf}, we derive that $\Capa_q(G) = \Capa_q(G')$ and $\Ind_q(G) = \Ind_q(G')+1$.

For Item~\ref{itm:minrkI}, fix a field $\Fset$, and notice that every matrix that represents the graph $G$ over $\Fset$ is obtained by combining a matrix that represents $G'$ over $\Fset$ with a row and a column associated with $u$, all of whose entries but the diagonal are zeros. This implies that ${\minrank}_\Fset(G) = {\minrank}_\Fset(G')+1$ and completes the proof.
\end{proof}

Before turning to the second reduction rule, let us state and prove the following lemma (see~\cite{Mazumdar15}).

\begin{lemma}\label{lemma:matching}
Let $G$ be a graph on $n$ vertices that admits a matching of size $r$. Then
\begin{enumerate}
  \item\label{itm:CapM} for every integer $q \geq 2$, $\Capa_q(G) \geq r$,
  \item\label{itm:IndM} for every integer $q \geq 2$, $\Ind_q(G) \leq n-r$, and
  \item\label{itm:minrkM} for every field $\Fset$, ${\minrank}_\Fset(G) \leq n-r$.
\end{enumerate}
\end{lemma}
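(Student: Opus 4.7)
The plan is to use the matching $M = \{e_1,\ldots,e_r\}$ with $e_i = \{u_i,v_i\}$ to partition the vertex set into the $2r$ matched vertices $V_1 = \bigcup_{i=1}^r \{u_i,v_i\}$ and the $n-2r$ unmatched vertices $V_2 = V \setminus V_1$, and then appeal to Lemma~\ref{lemma:union}. The three items reduce to bounding the storage capacity of $G[V_1]$ from below and the index coding/minrank values of both $G[V_1]$ and $G[V_2]$ from above, with the piece $G[V_2]$ contributing the trivial $n-2r$ and the piece $G[V_1]$ contributing $r$ thanks to the matching.

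For item~\ref{itm:CapM}, I would iterate Lemma~\ref{lemma:union}: first with the partition $V = V_1 \cup V_2$ to get $\Capa_q(G) \geq \Capa_q(G[V_1]) + \Capa_q(G[V_2]) \geq \Capa_q(G[V_1])$, and then further partition $V_1$ into the $r$ matched pairs $\{u_i,v_i\}$ to obtain $\Capa_q(G[V_1]) \geq \sum_{i=1}^r \Capa_q(G[\{u_i,v_i\}])$. Since $\{u_i,v_i\}$ is an edge of $G$, each induced subgraph $G[\{u_i,v_i\}]$ equals $K_2$, and the code $\{(a,a) : a \in [q]\}$ shows $\Capa_q(K_2) \geq 1$ (each vertex's value equals that of its unique neighbor, so it is determined by $x_{N_G(\cdot)}$). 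Summing gives $\Capa_q(G) \geq r$.

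For items~\ref{itm:IndM} and~\ref{itm:minrkM}, I would apply Lemma~\ref{lemma:union} once to the partition $V = V_1 \cup V_2$, obtaining
\[
\Ind_q(G) \leq \Ind_q(G[V_1]) + \Ind_q(G[V_2]), \qquad {\minrank}_\Fset(G) \leq {\minrank}_\Fset(G[V_1]) + {\minrank}_\Fset(G[V_2]).
\]
By Lemmas~\ref{lemma:IndCliqueCover} and~\ref{lemma:MRCliqueCover}, both $\Ind_q$ and $\minrank_\Fset$ are bounded by the clique cover number. The matching edges $e_1,\ldots,e_r$ form a partition of $V_1$ into $r$ cliques of size~$2$ in $G[V_1]$, so $\overline{\chi}(G[V_1]) \leq r$. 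For $V_2$, the partition into singleton cliques gives $\overline{\chi}(G[V_2]) \leq n-2r$. Combining these yields $\Ind_q(G) \leq r + (n-2r) = n-r$ and ${\minrank}_\Fset(G) \leq r + (n-2r) = n-r$, as desired.

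I do not anticipate any real obstacle here: once the matching-induced partition is fixed, the argument is a direct composition of Lemma~\ref{lemma:union} with the clique cover bounds (for the upper bounds) and with the obvious storage code on a single edge (for the lower bound). The only mild subtlety is making sure to apply Lemma~\ref{lemma:union} in the correct direction for each quantity, since it provides $\geq$ for $\Capa_q$ but $\leq$ for $\Ind_q$ and $\minrank_\Fset$.
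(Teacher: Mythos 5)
Your proof is correct, but it takes a genuinely different route from the paper's, particularly for Item~\ref{itm:CapM}. The paper observes directly that the matching yields a clique cover of $G$ itself of size $n-r$ (the $r$ edges plus the $n-2r$ unmatched singletons), so $\overline{\chi}(G) \leq n-r$, and then dispatches Items~\ref{itm:CapM} and~\ref{itm:IndM} in one stroke via the chain $n - \Capa_q(G) \leq \Ind_q(G) \leq \overline{\chi}(G) \leq n-r$, using the duality Lemma~\ref{lemma:DualityIndCap} to convert the index-coding bound into the storage-capacity bound; Item~\ref{itm:minrkM} follows the same way from Lemma~\ref{lemma:MRCliqueCover}. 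You instead prove Item~\ref{itm:CapM} constructively: you iterate Lemma~\ref{lemma:union} over the matched pairs and exhibit the repetition code $\{(a,a) : a \in [q]\}$ on each $K_2$, which avoids any appeal to the duality lemma and makes the $r$ ``bits'' of storage concrete. For Items~\ref{itm:IndM} and~\ref{itm:minrkM} you detour through Lemma~\ref{lemma:union} and bound $\overline{\chi}(G[V_1])$ and $\overline{\chi}(G[V_2])$ separately, which is slightly longer than the paper's single clique cover of $G$ but entirely equivalent. The paper's version is a bit more unified and shorter; yours is more modular and, for Item~\ref{itm:CapM}, more self-contained since it does not invoke the inequality $\alpha(H)\chi(H) \geq |V_H|$ hidden inside Lemma~\ref{lemma:DualityIndCap}.
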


\begin{proof}
Let $G$ be a graph on $n$ vertices with a matching $M$ of size $r=|M|$.
The graph $G$ admits a clique cover of size $n-r$, as follows by combining the $r$ edges of the matching $M$, viewed as cliques of size $2$, together with the remaining $n-2r$ vertices that do not lie in $M$, viewed as cliques of size $1$.
Using Lemmas~\ref{lemma:IndCliqueCover} and~\ref{lemma:DualityIndCap}, it follows that for every integer $q \geq 2$, it holds that
\[ n-{\Capa}_q(G) \leq {\Ind}_q(G) \leq \overline{\chi}(G) \leq n-r,\]
as desired for Items~\ref{itm:CapM} and~\ref{itm:IndM} of the lemma.
For Item~\ref{itm:minrkM}, we use Lemma~\ref{lemma:MRCliqueCover} to obtain that for every field $\Fset$, ${\minrank}_\Fset(G) \leq \overline{\chi}(G) \leq n-r$.
This completes the proof.
\end{proof}

The second and main reduction rule of our kernelization algorithm replaces a graph $G$ by its subgraph $G[R]$ induced by the set $R$ from a given crown decomposition of $G$ (see Definition~\ref{def:crown}).

\begin{lemma}\label{lemma:crown}
Let $G = (V,E)$ be a graph with a crown decomposition $V = C \cup H \cup R$. Then
\begin{enumerate}
  \item\label{itm:Cap} for every integer $q \geq 2$, $\Capa_q(G) = \Capa_q(G[R])+|H|$,
  \item\label{itm:Ind} for every integer $q \geq 2$, $\Ind_q(G) = \Ind_q(G[R])+|C|$, and
  \item\label{itm:minrk} for every field $\Fset$, ${\minrank}_\Fset(G) = {\minrank}_\Fset(G[R])+|C|$.
\end{enumerate}
\end{lemma}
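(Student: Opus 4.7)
The plan is to establish each equality as a pair of opposite inequalities, one of which is cheap and one of which is the main content. In every item, applying Lemma~\ref{lemma:union} to the partition $V = R \cup (C \cup H)$ reduces the cheap direction to bounding the ``crown part'' $G[C \cup H]$. Since the crown decomposition supplies a matching of $H$ into $C$ of size $|H|$, Lemma~\ref{lemma:matching} gives $\Capa_q(G[C \cup H]) \geq |H|$, and Lemma~\ref{lemma:union} then yields $\Capa_q(G) \geq \Capa_q(G[R]) + |H|$. The same matching, together with the $|C|-|H|$ unmatched vertices of $C$ as singletons, forms a clique cover of $G[C \cup H]$ of size $|C|$, so Lemmas~\ref{lemma:IndCliqueCover} and~\ref{lemma:MRCliqueCover} give $\Ind_q(G[C \cup H]), \minrank_\Fset(G[C \cup H]) \leq |C|$, which through Lemma~\ref{lemma:union} produce $\Ind_q(G) \leq \Ind_q(G[R]) + |C|$ and $\minrank_\Fset(G) \leq \minrank_\Fset(G[R]) + |C|$.

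For the harder directions in Items~\ref{itm:Cap} and~\ref{itm:Ind}, I would analyze the confusion graph $\Conf_q(G)$ along the partition of $[q]^V$ by the $H$-coordinates. For each $h^* \in [q]^H$, let $S_{h^*} = \{x \in [q]^V : x_H = h^*\}$. Using the crown properties $N_G(v) \subseteq H$ for every $v \in C$ and $N_G(v) \subseteq H \cup R$ for every $v \in R$, a short case analysis will show that two distinct vertices $(c,h^*,r), (c',h^*,r') \in S_{h^*}$ are adjacent in $\Conf_q(G)$ if and only if either $c \neq c'$, or $c = c'$ and $r, r'$ are adjacent in $\Conf_q(G[R])$. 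In other words, the induced subgraph of $\Conf_q(G)$ on $S_{h^*}$ is the lexicographic product $K_{q^{|C|}}[\Conf_q(G[R])]$. For Item~\ref{itm:Cap}, any independent set $I$ of $\Conf_q(G)$ can be partitioned as $I = \bigcup_{h^*} (I \cap S_{h^*})$, and since every two vertices of $S_{h^*}$ with distinct $c$-components are adjacent, each piece lies in a single copy of $\Conf_q(G[R])$ and hence satisfies $|I \cap S_{h^*}| \leq \alpha(\Conf_q(G[R]))$. Summing over the $q^{|H|}$ values of $h^*$ and invoking Proposition~\ref{prop:CapConf} yields $\Capa_q(G) \leq |H| + \Capa_q(G[R])$. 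For Item~\ref{itm:Ind}, the same lexicographic-product structure forces the $q^{|C|}$ copies inside $S_{h^*}$ to receive pairwise disjoint palettes in any proper coloring of $\Conf_q(G)$, each palette of size at least $\chi(\Conf_q(G[R]))$, so $\chi(\Conf_q(G)) \geq q^{|C|} \cdot \chi(\Conf_q(G[R]))$; since $|C|$ is an integer, applying $\lceil \log_q (\cdot) \rceil$ and using Proposition~\ref{prop:IndConf} gives $\Ind_q(G) \geq |C| + \Ind_q(G[R])$.

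For the remaining direction of Item~\ref{itm:minrk}, I would argue directly with matrices. Let $A \in \Fset^{V \times V}$ represent $G$ over $\Fset$ with $\rank_\Fset(A) = \minrank_\Fset(G)$, and view it as a block matrix indexed by $C, H, R$. Independence of $C$ forces the $C$-$C$ block to be a diagonal matrix with nonzero diagonal entries, and the $C$-$R$ separation forces the $C$-$R$ and $R$-$C$ blocks to vanish. The $R$-$R$ block is then itself a representing matrix of $G[R]$ over $\Fset$, so its rank is at least $\minrank_\Fset(G[R])$. Projecting the column space of $A$ onto the $R$-coordinates, the image has dimension at least this rank, while the $|C|$ columns of $A$ indexed by $C$ are linearly independent (because of the invertible diagonal $C$-$C$ block) and lie entirely in the kernel of the projection (because of the vanishing $R$-$C$ block). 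Rank-nullity then gives $\rank_\Fset(A) \geq |C| + \minrank_\Fset(G[R])$, as required. The main obstacle I anticipate is carefully verifying the lexicographic-product description of $\Conf_q(G)$ inside each slice $S_{h^*}$; once this is in place, the $\alpha$- and $\chi$-side arguments for Items~\ref{itm:Cap} and~\ref{itm:Ind}, together with the block-matrix computation for Item~\ref{itm:minrk}, all fall out cleanly.
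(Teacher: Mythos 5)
Your proof is correct and takes essentially the same approach as the paper: the cheap inequalities via Lemma~\ref{lemma:union} together with the matching of $H$ into $C$, and the hard inequalities by slicing $\Conf_q(G)$ along the $H$-coordinates (which is exactly the paper's argument, packaged slightly more uniformly in lexicographic-product language) and by the block structure of a representing matrix. The only cosmetic differences are that the paper obtains the $\Capa_q$ upper bound by pigeonhole on a single $H$-slice rather than summing over all slices, and the minrank lower bound by passing to the principal submatrix on $C\cup R$ rather than by rank--nullity on the $R$-projection.
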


\begin{proof}
Let $G = (V,E)$ be a graph on the vertex set $V=[n]$, and let $V = C \cup H \cup R$ be a crown decomposition of $G$.
By definition, the graph $G$ admits a matching of size $|H|$, connecting each vertex of $H$ to a vertex of $C$.

For Item~\ref{itm:Cap}, fix an integer $q \geq 2$, and let us prove that $\Capa_q(G) = \Capa_q(G[R])+|H|$.
We first prove that $\Capa_q(G) \geq \Capa_q(G[R])+|H|$.
The graph $G[C \cup H]$ admits a matching of size $|H|$, hence by Item~\ref{itm:CapM} of Lemma~\ref{lemma:matching}, it holds that ${\Capa}_q(G[C \cup H]) \geq |H|$. Using Lemma~\ref{lemma:union}, this implies that
\[ {\Capa}_q(G) \geq {\Capa}_q(G[R]) + {\Capa}_q(G[C \cup H]) \geq {\Capa}_q(G[R]) + |H|.\]

We next show that $\Capa_q(G) \leq \Capa_q(G[R])+|H|$.
By Proposition~\ref{prop:CapConf}, this inequality is equivalent to $\alpha(\Conf_q(G)) \leq \alpha(\Conf_q(G[R])) \cdot q^{|H|}$.
Let $I$ be a maximum independent set in $\Conf_q(G)$. By the pigeonhole principle, there exists a set $I' \subseteq I$ of size $|I'| \geq |I| \cdot q^{-|H|}$ such that the vectors of $I'$ share the same restriction to the entries of $H$.
We claim that all the vectors of $I'$ agree on the entries of $C$.
Indeed, fix some $i \in C$, and notice that $N_G(i) \subseteq H$, because $C$ is an independent set in $G$ with no edges to the vertices of $R$. Hence, for every pair of distinct vectors $x,y \in I'$, we have $x_{N_G(i)} = y_{N_G(i)}$, which implies that $x_i = y_i$, because $I'$ is an independent set in $\Conf_q(G)$.
Now, let $J$ denote the set of restrictions of the vectors of $I'$ to the entries of $R$. Since the vectors of $I'$ agree on the entries of $C \cup H$, it holds that $|I'| = |J|$. We will prove now that the set $J$ forms an independent set in $\Conf_q(G[R])$. This will complete the proof, because it implies that
\[\alpha(\Conf_q(G)) = |I| \leq |I'| \cdot q^{|H|} =  |J| \cdot q^{|H|} \leq \alpha(\Conf_q(G[R])) \cdot q^{|H|}.\]
To prove that $J$ forms an independent set in $\Conf_q(G[R])$, let $x,y \in J$ be two distinct vectors. By the definition of $J$, there exist two vectors $\widetilde{x}, \widetilde{y} \in I$ such that $\widetilde{x}_{C \cup H} = \widetilde{y}_{C \cup H}$, $\widetilde{x}_{R} = x$, and $\widetilde{y}_{R} = y$. Since $\widetilde{x}$ and $\widetilde{y}$ are not adjacent in $\Conf_q(G)$, it follows that there is no $i \in [n]$ with $\widetilde{x}_i \neq \widetilde{y}_i$ and $\widetilde{x}_{N_G(i)} = \widetilde{y}_{N_G(i)}$. By $\widetilde{x}_{C \cup H} = \widetilde{y}_{C \cup H}$, this implies that there is no $i \in R$ such that $\widetilde{x}_i \neq \widetilde{y}_i$ and $\widetilde{x}_{N_{G[R]}(i)} = \widetilde{y}_{N_{G[R]}(i)}$, hence $x$ and $y$ are not adjacent in $\Conf(G[R])$, as required.

For Item~\ref{itm:Ind}, fix an integer $q \geq 2$, and let us prove that $\Ind_q(G) = \Ind_q(G[R])+|C|$.
We first show that $\Ind_q(G) \geq \Ind_q(G[R])+|C|$.
It suffices to prove that $\chi(\Conf_q(G)) \geq \chi(\Conf_q(G[R])) \cdot q^{|C|}$, which by Proposition~\ref{prop:IndConf}, implies that
\[{\Ind}_q(G) = \lceil \log_q \chi(\Conf_q(G)) \rceil \geq \lceil \log_q \chi(\Conf_q(G[R])) \rceil +|C| = {\Ind}_q(G[R])+|C|.\]
To this end, let $\mathfrak{B}$ denote the subgraph of $\Conf_q(G)$ induced by the vectors whose restriction to the entries of $H$ is, say, a vector of ones.
Observe that the subgraph of $\mathfrak{B}$ induced by the vertices that share some fixed restriction to the entries of $C$ is isomorphic to $\Conf_q(G[R])$.
Observe further that if two vertices of $\mathfrak{B}$ differ on their restrictions to the entries of $C$, then they are adjacent, because $C$ is an independent set in $G$ with no edges to the vertices of $R$.
This implies that $\mathfrak{B}$ consists of $q^{|C|}$ disjoint copies of $\Conf_q(G[R])$, with all the possible edges between vertices of distinct copies.
It therefore follows that $\chi(\Conf_q(G)) \geq \chi(\mathfrak{B}) = \chi(\Conf_q(G[R])) \cdot q^{|C|}$, as required.

We next show that $\Ind_q(G) \leq \Ind_q(G[R])+|C|$.
The graph $G[C \cup H]$ admits a matching of size $|H|$, hence by Item~\ref{itm:IndM} of Lemma~\ref{lemma:matching}, it holds that ${\Ind}_q(G[C \cup H]) \leq |C \cup H| - |H| = |C|$. Using Lemma~\ref{lemma:union}, this implies that ${\Ind}_q(G) \leq {\Ind}_q(G[R]) + {\Ind}_q(G[C \cup H]) \leq {\Ind}_q(G[R]) + |C|$.

For Item~\ref{itm:minrk}, fix a field $\Fset$, and let us prove that ${\minrank}_\Fset(G) = {\minrank}_\Fset(G[R])+|C|$.
We first show that ${\minrank}_\Fset(G) \geq {\minrank}_\Fset(G[R])+|C|$.
Let $A$ be a matrix that represents the graph $G$ over $\Fset$ and satisfies ${\minrank}_\Fset(G) = {\rank}_\Fset(A)$.
Consider the sub-matrices $A_R$ and $A_{C \cup R}$ of $A$ obtained by restricting $A$ to the rows and columns associated with the vertices of $R$ and $C \cup R$ respectively.
The matrix $A_R$ represents the graph $G[R]$ over $\Fset$, hence ${\rank}_\Fset(A_R) \geq {\minrank}_\Fset(G[R])$, and the matrix $A_{C \cup R}$ clearly satisfies ${\rank}_\Fset(A) \geq {\rank}_\Fset(A_{C \cup R})$.
Since $C$ is an independent set in $G$ with no edges to the vertices of $R$, it follows that up to a permutation of the rows and columns, $A_{C \cup R}$ is a $2 \times 2$ block matrix, where the diagonal block associated with $R$ is $A_R$, the diagonal block associated with $C$ is some diagonal matrix with nonzero values on the diagonal, and all the other entries are zeros. This implies that ${\rank}_\Fset(A_{C \cup R}) = {\rank}_\Fset(A_{R}) + |C|$. We therefore obtain that
\[ {\minrank}_\Fset(G) = {\rank}_\Fset(A) \geq {\rank}_\Fset(A_{C \cup R}) = {\rank}_\Fset(A_{R}) + |C| \geq {\minrank}_\Fset(G[R])+|C|.\]

We finally show that ${\minrank}_\Fset(G) \leq {\minrank}_\Fset(G[R])+|C|$.
The graph $G[C \cup H]$ admits a matching of size $|H|$, hence by Item~\ref{itm:minrkM} of Lemma~\ref{lemma:matching}, we have ${\minrank}_\Fset(G[C \cup H]) \leq |C \cup H|-|H| =  |C|$. Using Lemma~\ref{lemma:union}, this implies that
\[{\minrank}_\Fset(G) \leq {\minrank}_\Fset(G[R]) + {\minrank}_\Fset(G[C \cup H]) \leq {\minrank}_\Fset(G[R]) +|C|,\]
completing the proof.
\end{proof}

\subsection{The Kernelization Algorithm}\label{sec:ker_algo}

The proofs of Theorems~\ref{thm:IntroKernelCap},~\ref{thm:IntroKernelInd}, and~\ref{thm:IntroKernelMR} are based on the following kernelization algorithm.
Note that the input $(G,k)$ is updated throughout the run of the algorithm.
Recall that $K_0$ stands for the graph with no vertices and no edges.

\begin{tcolorbox}
Input: A pair $(G,k)$ of a graph $G=(V,E)$ and an integer $k \geq 0$.
\begin{enumerate}
  \item\label{itm:k=0}{If $k \leq 0$, then return $(K_0,0)$.}
  \item\label{itm:isolated} Remove from $G$ its isolated vertices.
  \item\label{itm:crown} If $G$ has at least $3k-2$ vertices, call the $\Crown$ algorithm from Theorem~\ref{thm:crown_algo} on $(G,k)$.
    \begin{itemize}
      \item If $\Crown$ returns a matching of $G$ of size $k$, then return $(K_0,0)$.
      \item If $\Crown$ returns a crown decomposition $V = C \cup H \cup R$ of $G$, then proceed with the instance $(G[R], k-|H|)$, and go back to Step~\ref{itm:k=0}.
    \end{itemize}
  \item\label{itm:return} Return $(G,k)$.
\end{enumerate}
\end{tcolorbox}

We first analyze the algorithm with respect to the $\SC_q$ problem and prove Theorem~\ref{thm:IntroKernelCap}.

\begin{proof}[ of Theorem~\ref{thm:IntroKernelCap}]
Fix $q \geq 2$, and consider an instance $(G,k)$ of the $\SC_q$ problem, where $G=(V,E)$ is a graph and $k$ is an integer.
Recall that this problem asks to decide whether $\Capa_q(G) \geq k$.
We prove the correctness of the kernelization algorithm presented above.

In Step~\ref{itm:k=0} of the algorithm, if $k \leq 0$, the algorithm correctly returns the fixed YES instance $(K_0,0)$.
In Step~\ref{itm:isolated} of the algorithm, the isolated vertices are removed from $G$, and the value of $k$ is not changed.
By Item~\ref{itm:CapI} of Lemma~\ref{lemma:isolated}, the removal of an isolated vertex from $G$ does not change the value of $\Capa_q(G)$, hence the equivalence of the instances is preserved.

Once $G$ has no isolated vertices, if it has at least $3k-2$ vertices, one can call the $\Crown$ algorithm from Theorem~\ref{thm:crown_algo} on $(G,k)$, as is done in Step~\ref{itm:crown} of our algorithm.
The theorem guarantees that $\Crown$ returns either a matching of $G$ of size $k$ or a crown decomposition $V = C \cup H \cup R$ of $G$. In the former case, Item~\ref{itm:CapM} of Lemma~\ref{lemma:matching} implies that $\Capa_q(G) \geq k$, hence the algorithm correctly returns a fixed YES instance. In the latter case, by Item~\ref{itm:Cap} of Lemma~\ref{lemma:crown}, it holds that ${\Capa}_q(G) = {\Capa}_q(G[R])+|H|$, ensuring the equivalence of the $(G,k)$ instance to the $(G[R], k-|H|)$ instance with which our algorithm proceeds. The algorithm repeats this process as long as the number of vertices in $G$ is at least $3k-2$ for the current value of the parameter $k$.
Since in each iteration the number of vertices of $G$ decreases, the number of iterations is bounded by the number of vertices of the input graph.

Finally, Step~\ref{itm:return} of the algorithm returns an instance $(G',k')$, equivalent to the original one, such that $G'$ has at most $3k'-3$ vertices. Since the algorithm never increases the value of the parameter $k$, it holds that $k' \leq k$, as required.

We conclude with the observation that the running time of the algorithm is polynomial, because the number of iterations is bounded by the number of vertices of the input graph and because the $\Crown$ algorithm runs, by Theorem~\ref{thm:crown_algo}, in polynomial time.
\end{proof}

We turn to the analysis of the kernelization algorithm for the $\DIC_q$ problem and prove Theorem~\ref{thm:IntroKernelInd}.

\begin{proof}[ of Theorem~\ref{thm:IntroKernelInd}]
Fix $q \geq 2$, and consider an instance $(G,k)$ of the $\DIC_q$ problem, where $G=(V,E)$ is a graph and $k$ is an integer.
Recall that this problem asks to decide whether $\Ind_q(G) \leq |V|-k$.
We prove the correctness of the kernelization algorithm presented at the beginning of the section.

In Step~\ref{itm:k=0} of the algorithm, if $k \leq 0$, the algorithm correctly returns a fixed YES instance.
In Step~\ref{itm:isolated} of the algorithm, the isolated vertices are removed from $G$, and the value of $k$ is not changed.
Every removal of an isolated vertex from $G$ decreases the number of vertices by $1$, and by Item~\ref{itm:IndI} of Lemma~\ref{lemma:isolated}, it also decreases the value of $\Ind_q(G)$ by $1$, hence the equivalence of the instances is preserved.

Once $G$ has no isolated vertices, if it has at least $3k-2$ vertices, one can call the $\Crown$ algorithm from Theorem~\ref{thm:crown_algo} on $(G,k)$, as is done in Step~\ref{itm:crown} of our algorithm.
The theorem guarantees that $\Crown$ returns either a matching of $G$ of size $k$ or a crown decomposition $V = C \cup H \cup R$ of $G$. In the former case, Item~\ref{itm:IndM} of Lemma~\ref{lemma:matching} implies that $\Ind_q(G) \leq |V|-k$, hence the algorithm correctly returns a fixed YES instance. In the latter case, by Item~\ref{itm:Ind} of Lemma~\ref{lemma:crown}, it holds that $\Ind_q(G) = \Ind_q(G[R])+|C|$, ensuring that the inequality $\Ind_q(G) \leq |V|-k$ holds if and only if it holds that
\[{\Ind}_q(G[R]) = {\Ind}_q(G)-|C| \leq |V|-|C|-k = |R|-(k-|H|),\]
hence the $(G,k)$ instance is equivalent to the $(G[R], k-|H|)$ instance with which our algorithm proceeds. The algorithm repeats this process as long as the number of vertices in $G$ is at least $3k-2$ for the current value of the parameter $k$.
Since in each iteration the number of vertices of $G$ decreases, the number of iterations is bounded by the number of vertices of the input graph.

Finally, Step~\ref{itm:return} of the algorithm returns an instance $(G',k')$, equivalent to the original one, such that $G'$ has at most $3k'-3$ vertices. Since the algorithm never increases the value of the parameter $k$, it holds that $k' \leq k$, as required. As shown in the proof of Theorem~\ref{thm:IntroKernelCap}, the running time of the algorithm is polynomial, so we are done.
\end{proof}

The proof of Theorem~\ref{thm:IntroKernelMR} is essentially identical to the one of Theorem~\ref{thm:IntroKernelInd}. The only difference lies in applying the third items of Lemmas~\ref{lemma:isolated},~\ref{lemma:matching}, and~\ref{lemma:crown} instead of their second items. To avoid repetitions, we omit the details.

\subsection{Fixed-Parameter Algorithms}

Now, we combine our kernelization results with Proposition~\ref{prop:exp_algo} to obtain the fixed-parameter algorithms declared in Theorem~\ref{thm:IntroFPT}.
\begin{proof}[ of Theorem~\ref{thm:IntroFPT}]
Fix $q \geq 2$, and let $(G,k)$ be an instance of the $\SC_q$ problem, where $G$ is a graph on $n$ vertices and $k$ is an integer.
Consider the algorithm that first calls the kernelization algorithm given by Theorem~\ref{thm:IntroKernelCap} on $(G,k)$ to obtain a kernel $(G',k')$, where $G'$ has at most $3k$ vertices, and then calls the algorithm given by Item~\ref{alg:C} of Proposition~\ref{prop:exp_algo} on $(G',k')$. The correctness of the algorithm immediately follows from the correctness of the used algorithms. The running time of the kernelization step is polynomial in $n$, and the running time of the decision step is $\widetilde{O}(1.1996^{q^{3k}})$. The total running time is thus bounded by $n^{O(1)} \cdot \widetilde{O}(1.1996^{q^{3k}})$, completing the proof of Item~\ref{itm:FPT1} of the theorem.

Items~\ref{itm:FPT2} and~\ref{itm:FPT3} are proved similarly. For Item~\ref{itm:FPT2}, one has to combine Theorem~\ref{thm:IntroKernelInd} with Item~\ref{alg:I} of Proposition~\ref{prop:exp_algo}, and for Item~\ref{itm:FPT3}, one has to combine Theorem~\ref{thm:IntroKernelMR} with Item~\ref{alg:MR} of Proposition~\ref{prop:exp_algo}.
To avoid repetitions, we omit the details.
\end{proof}

\section*{Acknowledgments}
We thank the anonymous referees for their valuable suggestions.

\bibliographystyle{abbrv}
\bibliography{crown}

\end{document}